\title{Warped Product Space-times}
\author{Xinliang An\Affiliation{University of Toronto, Toronto, ON; \url{xinliang.an@utoronto.ca}} \and Willie~Wai~Yeung Wong\Affiliation{Michigan State University, East Lansing, MI; \url{wongwwy@member.ams.org}}}
\keywords{GR, DG, WarpedProduct, BirkhoffTheorem, HawkingMass, KodamaVector, EinsteinMatter}
\newcommand*\hawking\varpi
\renewcommand*\covD[1][]{\nabla\ifthenelse{\equal{#1}{}}{}{^{[#1]}}}
\newcommand*\Ric[1]{\mathrm{Ric}^{[#1]}}
\newcommand*\Scal[1]{\mathrm{S}^{[#1]}}
\newcommand*\Einst[1]{\mathrm{G}^{[#1]}}
\newcommand*\secF{\mathit{II}}
\newcommand*\kodcur{\zeta}
\begin{document}
\maketitle

\begin{wwwabstract}
Many classical results in relativity theory concerning spherically symmetric space-times have easy generalizations to warped product space-times, with a two-dimensional Lorentzian base and arbitrary dimensional Riemannian fibers. 
We first give a systematic presentation of the main geometric constructions, with emphasis on the Kodama vector field and the Hawking energy; the construction is signature independent. 
This leads to proofs of general Birkhoff-type theorems for warped product manifolds; our theorems in particular apply to situations where the warped product manifold is not necessarily Einstein, and thus can be applied to solutions with matter content in general relativity.
Next we specialize to the Lorentzian case and study the propagation of null expansions under the assumption of the dominant energy condition. We prove several non-existence results relating to the Yamabe class of the fibers, in the spirit of the black-hole topology theorem of Hawking-Galloway-Schoen.
Finally we discuss the effect of the warped product ansatz on matter models. In particular we construct several cosmological solutions to the Einstein-Euler equations whose spatial geometry is generally not isotropic. 
\end{wwwabstract}

\section{Introduction}

In this paper we report on our investigation of pseudo-Riemannian warped product manifolds with 2 dimensional base. 
We prove, in this context, a microcosm of results that either generalize those previously were shown in the setting of a spherically symmetric ansatz in general relativity, or that specialize (with much simpler proofs) results known in greater generality. 

Our two main observations are:
\begin{enumerate}
	\item Many \emph{Lorentzian geometric} results relating to spherically-symmetric space-times do not, in fact, make full use of the spherical symmetry ansatz. In particular, they can be reproduced in the warped product context with essentially arbitrary fibers. These include the rigidity portion of Birkhoff's theorem, the realization of the Hawking energy as a dual potential of a geometric vector field, as well as much of causal theory under the dominant energy condition. 
	\item When coupled to \emph{matter fields} in the general relativistic setting, the warped product ansatz often impose strong restrictions on the allowable solutions, reducing the theory to be essentially similar to the spherically symmetric case. This include our novel generalized Birkhoff theorem classifying warped product electrovacuum solutions (previous results only treat vacuum solutions), as well as our observation that in the case of scalar field and fluid matter, the warped product ansatz forces the fiber manifold to be Einstein (but not necessarily homogeneous). 
\end{enumerate}

One of our original goals in the present paper was to develop a framework suitable for future investigations of the dynamical properties of solutions to Einstein's field equations, under the simplifying warped-product ansatz. The spherical symmetry ansatz has led to remarkable discoveries \cite{Christ1995,Christ1999,Daferm2003, Daferm2005, Daferm2012, BurLef2014}, and similarly planar or toroidal symmetries \cite{LefTch2011,LefSmu2010}. 
There has been furthermore recent results concerning general \emph{surface symmetric} spacetimes \cite{DafRen2016}.
One of our conclusions, unfortunately for our original motivation, is that for some common matter models, the warped-product ansatz is \emph{no more general} than the surface symmetric ansatz. 

Below we first introduce the warped product ansatz and then summarize the main results of this paper. 

\subsection{Pseudo-Riemannian warped products}
Let $(Q,g)$ be a two-dimensional pseudo-Riemannian manifold and $(F,h)$ be a pseudo-Riemannian manifold with dimension $n \geq 2$. We consider the warped product $M = Q\times_r F$ where $r$ is a positive real-valued function on $Q$. The warped product metric on $M$ is given by 
\begin{equation}
	\tilde{g} = g + r^2 h,
\end{equation}
where by an abuse of notation we identify $g$ and $h$ with their pull-backs onto $M$ via the canonical projections. We refer to $r$ as either the \emph{warping function} or, following the relativity literature, the \emph{area radius}. 

Throughout we will use $\covD[\tilde{g}]$, $\covD[g]$, and $\covD[h]$ to denote the Levi-Civita connection on $(M,\tilde{g})$, $(Q,g)$, and $(F,h)$ respectively; when there is no risk of confusion (e.g.\ for scalars) we will just use $\covD$. Similarly $\Ric{\tilde{g}}$, $\Scal{g}$ etc.\ will denote the corresponding Ricci and scalar curvatures, and $\Einst{h}$ etc.\ will denote the corresponding Einstein tensors $\Einst{h} = \Ric{h} - \frac12 \Scal{h} h$. 

In this subsection we record some basic, well-known properties of warped-product manifolds; see \cite[Ch.\ 7, \P 33--43]{ONeil1983} and \cite[Ch.\ 3, \S6]{BeEhEa1996} for detailed proofs and computations (note that in the latter the authors define the warping function as what would be $r^2$ in our notation).

\begin{prop}[Connection properties {\cite[\P7.35]{ONeil1983}}] 
	On $M = Q \times_r F$, if $X, Y$ are tangent to $Q$ and $V,W$ are tangent to $F$, we have:
	\begin{enumerate}
		\item The second fundamental form of $F$ as embedded submanifolds in $M$ satisfy
			\[
				\secF(V,W) = - h(V,W)\cdot r (\D*r)^\sharp = - \tilde{g}(V,W) \cdot \frac{(\D*r)^\sharp}{r}.
			\]
			In particular the fibers $\{q\}\times F$ are totally umbilic with mean curvature $-\frac{n}{r} (\D*r)^\sharp$. 
		\item $\covD[\tilde{g}]_X Y= \covD[g]_XY$ (where we identified the latter with its lift). 
		\item $\covD[\tilde{g}]_X V = \covD[\tilde{g}]_V X = \dfrac{X(r)}{r} V$.
		\item $\covD[\tilde{g}]_V W + \innerprod[h]{V}{W} r (\D*r)^\sharp = \covD[h]_VW$. 
	\end{enumerate}
\end{prop}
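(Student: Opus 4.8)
The plan is to derive all four statements from the Koszul formula for the Levi-Civita connection $\covD[\tilde{g}]$ combined with the standard calculus of horizontal and vertical lifts. For any vector fields the Koszul identity reads
\[
2\tilde{g}(\covD[\tilde{g}]_A B, C) = A\,\tilde{g}(B,C) + B\,\tilde{g}(A,C) - C\,\tilde{g}(A,B) + \tilde{g}([A,B],C) - \tilde{g}([A,C],B) - \tilde{g}([B,C],A).
\]
Before invoking it I would record the bracket and metric facts that make every term transparent: lifts of $Q$-fields commute with lifts of $F$-fields (so $[X,V]=0$), the bracket of two horizontal (resp.\ vertical) lifts is again horizontal (resp.\ vertical), a horizontal lift annihilates functions pulled back from $F$ while a vertical lift annihilates functions pulled back from $Q$, and the normalizations $\tilde{g}(X,Y)=g(X,Y)$, $\tilde{g}(X,V)=0$, $\tilde{g}(V,W)=r^2 \innerprod[h]{V}{W}$, in which $r^2$ is constant along $F$ and $\innerprod[h]{V}{W}$ is constant along $Q$.

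With these in hand, parts (2)--(4) are each a single application of Koszul. For (2), feeding in $X,Y,Z$ all horizontal collapses every term to the corresponding $g$-quantity, giving $\covD[g]_X Y$; pairing instead against a vertical $V$ annihilates all six terms, so $\covD[\tilde{g}]_X Y$ has no fiber component. For (3), torsion-freeness together with $[X,V]=0$ yields $\covD[\tilde{g}]_X V = \covD[\tilde{g}]_V X$ at once; pairing $\covD[\tilde{g}]_X V$ against a horizontal $Z$ kills every term, while pairing against a vertical $W$ leaves only $X\,\tilde{g}(V,W) = 2r\,X(r)\,\innerprod[h]{V}{W}$, which rearranges to $\tfrac{X(r)}{r}\tilde{g}(V,W)$. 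For (4), pairing $\covD[\tilde{g}]_V W$ against a vertical $U$ extracts a common factor $r^2$ from each surviving term and reproduces the Koszul formula for $h$, identifying the tangential part as $\covD[h]_V W$; pairing against a horizontal $Z$ leaves only $-Z\,\tilde{g}(V,W) = -2r\,Z(r)\,\innerprod[h]{V}{W}$, which identifies the normal part as $-r\,\innerprod[h]{V}{W}(\D*r)^\sharp$.

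Finally, part (1) is a corollary of (4). Since the fibers $\{q\}\times F$ have vertical tangent spaces and horizontal normal spaces, the second fundamental form is exactly the normal component of $\covD[\tilde{g}]_V W$, so $\secF(V,W) = -r\,\innerprod[h]{V}{W}(\D*r)^\sharp$; rewriting $r\,\innerprod[h]{V}{W}$ as $\tilde{g}(V,W)/r$ gives the second form of the identity, and tracing over a $\tilde{g}$-orthonormal frame $\{E_i\}$ of $F$ produces the mean curvature $-\tfrac{n}{r}(\D*r)^\sharp$. The computation is mechanical and carries no deep obstacle; the only points requiring genuine care are the faithful bookkeeping of the lift and bracket identities, and — in the trace — keeping track of the signs $\tilde{g}(E_i,E_i)=\pm1$ so that they cancel in pairs and the mean curvature comes out signature-independent.
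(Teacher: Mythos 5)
Your proposal is correct, and it is essentially the argument behind the result as proved in the reference the paper cites (O'Neill, \P7.35): the paper itself offers no proof beyond that citation, and the cited proof is precisely this Koszul-formula computation with the lift, bracket, and metric identities you record, followed by reading off the normal part of $\covD[\tilde{g}]_V W$ as the second fundamental form. The only nitpick is cosmetic: in the trace giving the mean curvature, each sign enters as $\tilde{g}(E_i,E_i)^2 = 1$ term by term rather than ``cancelling in pairs,'' but the conclusion is unaffected.
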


\begin{prop}[Curvature properties {\cite[\P7.44]{ONeil1983}}]
	On $M = Q\times_r F$, given $X,Y$ tangent to $Q$ and $V,W$ tangent to $F$, we have:
	\begin{subequations}
	\begin{align}
		\Ric{\tilde{g}}(X,Y) &= \Ric{g}(X,Y) - \frac{n}{r} \covD^2_{X,Y} r; \label{eq:RicciQ}\\
		\Ric{\tilde{g}}(X,V) &= 0; \label{eq:RicciCross} \\
		\Ric{\tilde{g}}(V,W) &= \Ric{h}(V,W) - \left( r\triangle_g r + (n-1) \innerprod[g]{\D*{r}}{\D*{r}} \right) h(V,W).\label{eq:RicciF}
	\end{align}
	\end{subequations}
\end{prop}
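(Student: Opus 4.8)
The plan is to compute the Ricci tensor of the warped product metric $\tilde{g} = g + r^2 h$ directly from the connection properties recorded in the previous proposition, by contracting the Riemann curvature tensor. The Ricci tensor is obtained as a trace $\Ric{\tilde{g}}(A,B) = \sum_i \epsilon_i \tilde{g}(R(E_i,A)B, E_i)$ over an orthonormal frame $\{E_i\}$ adapted to the product structure, where the $\epsilon_i = \pm 1$ account for the pseudo-Riemannian signature. The natural strategy is to split this frame into a part tangent to $Q$ (two vectors) and a part tangent to $F$ (the remaining $n$ vectors), and then exploit the fact, established in the connection proposition, that $\covD[\tilde{g}]_X Y = \covD[g]_X Y$ for $Q$-directions and that the fibers are totally umbilic.

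First I would recall the general formula for the curvature of a second fundamental form via the Gauss and Codazzi equations, or equivalently differentiate the connection formulas from Proposition 7.35 a second time. For \eqref{eq:RicciQ}, I would trace over the fiber directions: the contribution of the $F$-frame to $\Ric{\tilde{g}}(X,Y)$ comes from terms of the form $\covD[\tilde{g}]_V \covD[\tilde{g}]_X Y - \covD[\tilde{g}]_X \covD[\tilde{g}]_V Y - \covD[\tilde{g}]_{[V,X]}Y$, and using $\covD[\tilde{g}]_X V = (X r / r) V$ repeatedly produces a Hessian of $r$. Summing over the $n$ fiber directions yields the factor $-\tfrac{n}{r}\covD^2_{X,Y} r$, while the $Q$-frame contribution reproduces $\Ric{g}(X,Y)$. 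For the cross term \eqref{eq:RicciCross}, I would observe that each curvature contraction mixes one $Q$-index and one $F$-index in a way that, by the product structure and the umbilicity relation $\secF(V,W) = -\tilde{g}(V,W)(\D*r)^\sharp/r$ being pure, must vanish by symmetry; concretely, the relevant components involve an odd number of fiber vectors and cannot close up.

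For the fiber equation \eqref{eq:RicciF}, the computation is the richest: the intrinsic fiber curvature contributes the rescaled $\Ric{h}(V,W)$ (with the $r^2$ scaling of the metric absorbed appropriately so that no explicit power of $r$ survives on that term), while the normal curvature contributes the Gauss-equation correction. The radial second-derivative term $r\triangle_g r$ arises from tracing the Hessian of $r$ over the two $Q$-directions, and the gradient-squared term $(n-1)\innerprod[g]{\D*r}{\D*r}$ comes from the product of two second-fundamental-form factors summed over the remaining $(n-1)$ fiber directions, reflecting the umbilic mean curvature $-\tfrac{n}{r}(\D*r)^\sharp$.

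The main obstacle I anticipate is bookkeeping rather than conceptual: correctly tracking the powers of $r$ that appear because the fiber frame must be normalized against $r^2 h$ rather than $h$, and ensuring the signs $\epsilon_i$ from the pseudo-Riemannian signature are handled uniformly so that the final result is signature-independent. A clean way to manage this is to work with a frame $\{F_a\}$ orthonormal for $h$ and use $E_a = r^{-1} F_a$ as the $\tilde{g}$-orthonormal fiber frame, so that the $r$-weights enter in a controlled manner through the connection formulas. Once the traces are organized this way, \eqref{eq:RicciQ}--\eqref{eq:RicciF} follow by collecting terms; I would present the derivation of \eqref{eq:RicciF} in full and indicate that \eqref{eq:RicciQ} and \eqref{eq:RicciCross} are obtained by the analogous and simpler contractions.
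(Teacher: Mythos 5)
Your proposal is correct and takes essentially the same route as the paper, which gives no proof of its own but defers to O'Neill \cite[\P7.44]{ONeil1983}, whose argument is exactly this: contract the curvature formulas obtained by differentiating the connection identities of the preceding proposition over a frame adapted to the product, normalizing the fiber frame against $r^2 h$. Your term-by-term accounting — the fiber trace of the Hessian producing $-\tfrac{n}{r}\covD^2_{X,Y} r$, the mixed curvature trace over the two $Q$-directions producing $r\triangle_g r$, the umbilic Gauss correction producing $(n-1)\innerprod[g]{\D*{r}}{\D*{r}}$, and the vanishing of curvature components carrying an odd number of fiber indices for \eqref{eq:RicciCross} — matches that computation.
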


\begin{conv}
	We let $\covD^2 r$ denote the lift to $M$ of the Hessian of $r$ on $(Q,g)$. The operator $\triangle_g = \trace_g \covD^2$ is the Laplace-Beltrami operator of the metric $g$; depending on its signature the operator could be either elliptic or hyperbolic (recall that $Q$ is two-dimensional). 
\end{conv}

An immediate consequence of the curvature properties is that the Einstein tensor is block-diagonal. More precisely, we can write $\Einst{\tilde{g}} = G_Q + G_F$ where $G_Q$ is tangent to $Q$ and $G_F$ is tangent to $F$, and are defined by
\begin{subequations}
	\begin{align}
		G_Q &\eqdef - \frac{n}{r} \covD^2 r + \frac{n}{r} \triangle_g r g \label{eq:EinsteinQ}\\
		\notag & \qquad\qquad - \frac{1}{2r^2} \left[ \Scal{h} - n(n-1) \innerprod[g]{\D*{r}}{\D*{r}} \right] g;\\
		G_F &\eqdef \Einst{h} - \frac{r^2}2 \Scal{g} h \label{eq:EinsteinF}\\
		\notag & \qquad\qquad + (n-1)r\left[ \triangle_g r + \frac{n-2}{2} \innerprod[g]{\D*{r}}{\D*{r}} \right]h.
	\end{align}
\end{subequations}
Here we used that since $(Q,g)$ is two-dimensional, its corresponding $\Einst{g}\equiv 0$. Note that in the case $n = 2$ (such as in the case of spherically symmetric solutions in $(1+3)$-dimensional gravity), the last equation simplifies also with $\Einst{h} \equiv 0$.

\subsection{Summary of results}
In Section \ref{sec:KodBirHaw} we introduce definitions, in the setting of warped-product manifolds, of the \textbf{Kodama vector field} and \textbf{Hawking energy density}. These generalize their well-known counterparts in spherical symmetry. We remark here that we make no a priori assumptions on the fiber $(F,h)$ being symmetric. The main results proven in this section are \textbf{Propositions \ref{prop:birkhoff1}} and \textbf{\ref{prop:hawkpot}}. 
The former is a generalization of Birkhoff's rigidity theorem; it states roughly that if $G_Q$ is pure-trace then the base manifold $(Q,g)$ admits a Killing vector field. Our hypothesis is weaker than previous results in this direction \cite{Schmid1997}. 
The latter states that the Hawking energy is the dual potential for the energy current associated to the Kodama vector field, a result only previously shown in spherical symmetry \cite{Kodama1980}. 

In Section \ref{sec:Fisotropy} we consider the consequences that $G_F$ is pure-trace. The main result is \textbf{Theorem \ref{thm:birkhoff2}} which generalizes the classification portion of Birkhoff's theorem. Roughly speaking, we are able to fully classify manifolds for which $G_Q$ and $G_F$ are both pure trace (note that our assumptions do not require $(M,\tilde{g})$ to be Einstein). The allowable manifolds are special Cartesian products, certain pp-wave space-times, and generalizations of the Reissner-Nordstr\"om(-(A)dS) family. Note that in this and in the previous section we allow our manifold to have arbitrary signature. 

We restrict to the Lorentzian case in Section \ref{sec:Causal}, and study the causal propagation properties in conjunction with the dominant energy condition. In \textbf{Theorem \ref{thm:topology}} we give a very short proof of the black hole topology theorem in our limited setting, and complement it with several other non-existence results that depends on the topology of the fiber $F$ (see \textbf{Theorem \ref{thm:negcurperiod}} and \textbf{Proposition \ref{prop:negcuropen}}). 

Finally in Section \ref{sec:Matter} we consider the effects of the warped product assumption on matter models. The main thrust of the results is that matter models force additional rigidity into the problem, and lead us to results similar to the spherically symmetric cases.

\section{Kodama, Birkhoff, and Hawking}\label{sec:KodBirHaw}
On the manifold $(Q,g)$, making use of its two-dimensionality we can write down the \emph{Kodama vector field}
\begin{equation}\label{eq:Kdef}
	K\eqdef (*_g \D*{r})^{\sharp}
\end{equation}
where $*_g$ is the Hodge operator (see Appendix \ref{app:hodgeprop} for a summary of its properties and some related notations). $K$ lifts to a vector field on $M$ that is orthogonal to the leaves $F$; we abuse notation and call the lift also $K$. 
This vector field was first constructed for spherically symmetric solutions in general relativity by Kodama \cite{Kodama1980} who realized that the Hawking energy can be described as an energy flux for this vector field. 
Recently it has been also used as a replacement for the time-like Killing vector field when trying to analyze \emph{non-static} spherically symmetric solutions to the Einstein equations (see \cite{AbrVis2010} and references therein).  
The goal of this section is to obtain generalizations of the above two uses to the context of pseudo-Riemannian warped products. 

Let $X$ be a vector field on $(Q,g)$, identified with its lift to $M$. Its deformation tensor on $(M,\tilde{g})$ is by definition
\[
	\lieD_X \tilde{g} = \lieD_X g + \lieD_X (r^2 h).
\]
By the warped product construction, $\lieD_X g$ is the same as the lift of the corresponding object on $Q$, while $\lieD_X h = 0$. So we obtain that
\begin{equation}\label{eq:defdecom}
	\lieD_X \tilde{g} = \lieD_X g + 2 r X(r) h.
\end{equation}
An immediate consequence is that (see also \cite[Remark 3.57.(7)]{BeEhEa1996})
\begin{lem}\label{lem:Kdivfree}
	A Killing vector field $X$ on $(Q,g)$ lifts to a Killing vector field on $(M,\tilde{g})$ if and only if $X(r) = 0$. 
\end{lem}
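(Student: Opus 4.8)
The plan is to use the standard characterization that a vector field is Killing exactly when its deformation tensor (the Lie derivative of the metric) vanishes, and then to read the result directly off the decomposition \eqref{eq:defdecom}. Concretely, I would phrase the question as: given that $X$ is Killing on the base, so that $\lieD_X g = 0$, when does its lift satisfy $\lieD_X \tilde g = 0$?

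Inserting the hypothesis $\lieD_X g = 0$ into \eqref{eq:defdecom} immediately reduces the deformation tensor of the lift to the single fiber term
\[
	\lieD_X \tilde g = 2 r\, X(r)\, h ,
\]
where $X(r)$ is a function on $Q$ regarded as a function on $M$ via pullback, and $h$ is the pullback of the fiber metric. Since $r$ is by hypothesis strictly positive and $h$, being a pseudo-Riemannian metric on $F$, is nowhere vanishing, this tensor vanishes at a point precisely when its scalar coefficient $X(r)$ does. Hence $\lieD_X \tilde g \equiv 0$ on $M$ if and only if $X(r) \equiv 0$ on $Q$, which is exactly the asserted equivalence.

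I do not anticipate any genuine obstacle: once the base Killing hypothesis is used, the lemma is a one-line consequence of \eqref{eq:defdecom}. The only point worth noting is that the fiber-tangent term $2 r\, X(r)\, h$ stands alone and cannot be cancelled by any $Q$-tangent contribution — here automatic, since $\lieD_X g$ has already been set to zero.
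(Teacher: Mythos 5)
Your proposal is correct and matches the paper's own argument: the lemma is stated there as an immediate consequence of the decomposition \eqref{eq:defdecom}, exactly as you read it off, with the vanishing of $2rX(r)h$ equivalent to $X(r)\equiv 0$ because $r>0$ and $h$ is nondegenerate. Your closing remark that the fiber term cannot be cancelled by a base-tangent contribution is a sensible sanity check, though it is indeed automatic once $\lieD_X g = 0$ is imposed.
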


The Kodama vector field is not Killing on $(Q,g)$ in general; however, $K$ is divergence free (since $\mathrm{d}^2 = 0$). This implies that $\trace_g \lieD_K g = 0$. 
Furthermore, by its definition $K(r) = 0$, and so we conclude the following generalization of Kodama's result in spherical symmetry \cite{Kodama1980}.
\begin{lem}
	The deformation tensor of $K$ as a vector field on $(M,\tilde{g})$ is the lift of its deformation tensor as a vector field on $(Q,g)$, that is to say, $\lieD_K \tilde{g} = \lieD_K g$. Furthermore, $\trace_{\tilde{g}} \lieD_K \tilde{g} = 0$. 
\end{lem}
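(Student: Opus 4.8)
The plan is to read both assertions directly off the decomposition \eqref{eq:defdecom}, together with the two defining properties of $K$ already recorded above: that $K(r) = 0$ and that $K$ is divergence free. Since $K$ is (the lift of) a vector field on $Q$, equation \eqref{eq:defdecom} applies verbatim with $X = K$, and the whole proof amounts to evaluating the two terms it produces.

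First I would establish the tensorial identity $\lieD_K \tilde{g} = \lieD_K g$. Specializing \eqref{eq:defdecom} gives $\lieD_K \tilde{g} = \lieD_K g + 2 r\, K(r)\, h$, so it suffices to kill the warping term. From the definition $K = (*_g \D*{r})^\sharp$ we compute $K(r) = \D*{r}(K) = \innerprod[g]{\D*{r}}{*_g \D*{r}}$, and this vanishes because in two dimensions the Hodge star acts on one-forms as a rotation, so $*_g \D*{r}$ is $g$-orthogonal to $\D*{r}$ (see Appendix \ref{app:hodgeprop}). Hence the second term drops out and $\lieD_K \tilde{g} = \lieD_K g$, the latter being understood as its lift.

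For the trace statement I would combine the block structure just obtained with divergence-freeness. Because $\lieD_K g$ is tangent to $Q$ (it has no components in the fiber directions) and the inverse metric is block-diagonal, $\tilde{g}^{-1} = g^{-1} \oplus r^{-2} h^{-1}$, a contraction of $\lieD_K g$ against $\tilde{g}^{-1}$ only ever sees the $g^{-1}$ block; thus $\trace_{\tilde{g}} \lieD_K \tilde{g} = \trace_{\tilde{g}} \lieD_K g = \trace_g \lieD_K g$. Finally $\trace_g \lieD_K g = 2\,\mathrm{div}_g K$, and $\mathrm{div}_g K = 0$ since $K^\flat = *_g \D*{r}$ is co-closed ($\mathrm{div}_g K \propto *_g \D *_g *_g \D*{r} \propto *_g \D\D*{r} = 0$ because $\D^2 = 0$). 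The only point demanding genuine care here — as opposed to bookkeeping — is the passage from $\trace_{\tilde{g}}$ to $\trace_g$: one must check that no fiber contribution enters the trace, which is precisely what the first claim guarantees, since it shows $\lieD_K \tilde{g}$ carries no $r^2 h$ piece. The sign conventions relating $\mathrm{div}_g$ to $*_g \D *_g$ are a secondary irritation but are immaterial, as the quantity vanishes regardless of sign.
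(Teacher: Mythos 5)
Your proposal is correct and follows essentially the same route as the paper, which likewise derives the lemma from \eqref{eq:defdecom} together with the two facts recorded just before it: $K(r)=0$ (by antisymmetry of $*_g$) and $\mathrm{div}_g K = 0$ (since $\mathrm{d}^2=0$), whence $\trace_g \lieD_K g = 0$. You merely spell out in more detail the step the paper leaves implicit — that a $Q$-tangent tensor has equal $\tilde{g}$- and $g$-traces by block-diagonality — which is a harmless elaboration, not a different argument.
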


Let us examine $\lieD_K \tilde{g}$ in more detail. On the base manifold we have
\begin{equation}\label{eq:defKalt}
	(\lieD_K g)(X,Y) = \innerprod[g]{\covD_X K}{Y} + \innerprod[g]{\covD_Y K}{X} = - (\covD^2 r)(X, *_g Y) - (\covD^2 r)(*_g X, Y)
\end{equation}
by \eqref{eq:Kdef}. Therefore by the first part of Proposition \ref{prop:hodge2} we have
\[
	(\lieD_K g)(X, *_g Y) =  -\sign(g) (\covD^2 r)(X, Y) - (\covD^2 r)(*_g X, *_g Y).
\]
Since the Hessian is symmetric, we can apply the second part of Proposition \ref{prop:hodge2} to get
\begin{equation}\label{eq:deformK}
	(\lieD_K g)(X, *_g Y) = -\sign(g) \left[ 2 (\covD^2 r)(X,Y) - \triangle_g r \innerprod[g]{X}{Y}\right].
\end{equation}
Note that inside the bracket on the right hand side is twice the trace-free part of the Hessian of $r$. This implies, in particular, the following version of Birkhoff's rigidity theorem. 
\begin{prop}[Birkhoff's theorem for warped products: rigidity] \label{prop:birkhoff1}
	Let $M = Q\times_r F$ be a pseudo-Riemannian warped product with $Q$ two-dimensional and connected. Suppose the Einstein tensor is such that $G_Q$ is pure-trace relative to $g$ (see \eqref{eq:EinsteinQ} and the preceding paragraph), then either the Kodama vector field $K$ is a non-trivial Killing vector field on $M$, or $r$ is constant (and hence $M$ is a Cartesian product).
\end{prop}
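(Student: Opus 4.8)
The plan is to show that the hypothesis "$G_Q$ is pure-trace" forces the trace-free part of the Hessian of $r$ to vanish, and then to read off from the formula \eqref{eq:deformK} that this makes $K$ a Killing field. The key observation is that equation \eqref{eq:deformK} already expresses the deformation tensor $\lieD_K g$ (up to the invertible map $*_g$ and the sign factor $\sign(g)$) in terms of precisely twice the trace-free Hessian of $r$. So the entire proposition reduces to understanding when the trace-free part of $\covD^2 r$ vanishes.

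First I would examine the structure of $G_Q$ in \eqref{eq:EinsteinQ}. The terms $\frac{n}{r}\triangle_g r\, g$ and the bracketed scalar term multiplying $g$ are both manifestly pure-trace, so the only possibly non-pure-trace contribution to $G_Q$ is the term $-\frac{n}{r}\covD^2 r$. Hence the assumption that $G_Q$ is pure-trace is equivalent (since $n\geq 2$ and $r>0$) to the statement that $\covD^2 r$ is itself pure-trace, i.e.\ that its trace-free part $\covD^2 r - \frac12 \triangle_g r\, g$ vanishes identically on $Q$. Substituting this into the bracket on the right-hand side of \eqref{eq:deformK} makes that bracket vanish, so $(\lieD_K g)(X,*_g Y) = 0$ for all $X,Y$; since $*_g$ is invertible this gives $\lieD_K g = 0$ on $(Q,g)$, and by the preceding lemma $\lieD_K \tilde g = \lieD_K g = 0$, so $K$ is Killing on $M$.

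It remains to handle the dichotomy in the conclusion, namely to rule out the degenerate possibility that $K$ is trivial (identically zero) unless $r$ is constant. By the definition \eqref{eq:Kdef}, $K = (*_g \D*{r})^\sharp$ vanishes at a point exactly where $\D*{r}$ vanishes, since $*_g$ and $\sharp$ are pointwise invertible. Thus $K\equiv 0$ precisely when $\D*{r}\equiv 0$, i.e.\ when $r$ is (locally, hence by connectedness globally) constant; in that case the warped product degenerates to a genuine Cartesian product $Q\times F$ with metric $g + (\text{const})\, h$. If $r$ is not constant, then $K$ is a non-trivial Killing field, which is the first alternative. This is where connectedness of $Q$ enters: it upgrades the local constancy to global constancy so the two alternatives are genuinely exclusive and exhaustive.

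The main obstacle I anticipate is purely bookkeeping rather than conceptual: one must be careful that "pure-trace" is being used consistently and that the scalar coefficients in \eqref{eq:EinsteinQ} really are proportional to $g$ so that they contribute nothing to the trace-free part. The genuinely substantive input — that the deformation tensor of $K$ is controlled by the trace-free Hessian of $r$ — has already been done in the derivation of \eqref{eq:deformK}, so the proof is essentially a matter of correctly interpreting the hypothesis and invoking the earlier lemma identifying $\lieD_K\tilde g$ with $\lieD_K g$.
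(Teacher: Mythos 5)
Your proof is correct, and its main step --- reading off from \eqref{eq:EinsteinQ} that the trace-free part of $G_Q$ is a nonzero multiple ($-\frac{n}{r}$, with $r>0$) of the trace-free Hessian of $r$, concluding $\lieD_K g = 0$ from \eqref{eq:deformK}, and lifting to $M$ via the lemma identifying $\lieD_K \tilde{g}$ with $\lieD_K g$ --- is exactly the paper's argument. The one place you diverge is the final dichotomy. You use the pointwise invertibility of $*_g$ and $\sharp$ to get $K(p) = 0$ if and only if $\D*{r}(p) = 0$, hence $K \equiv 0$ iff $\D*{r} \equiv 0$ iff $r$ is constant (connectedness upgrading local to global constancy); this is simpler than what the paper does and fully suffices for the proposition as stated. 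The paper instead invokes the Killing-field rigidity result \cite[\P9.27]{ONeil1983} (if $K(p)$ and $\covD K(p)$ both vanish at some $p$, then $K \equiv 0$), which buys a strictly stronger conclusion: when $r$ is non-constant, $K$ is non-vanishing on an open \emph{dense} subset of $M$. That refinement is not needed here but is exploited immediately afterwards --- Remark \ref{rmk:discreteFix} sharpens it to discreteness of $\{\D*{r}=0\}$, which is then used in the proof of Theorem \ref{thm:birkhoff2} --- so the paper's extra step is setting up later applications rather than closing any gap present in your argument.
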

\begin{proof}
	By \eqref{eq:EinsteinQ} we see that the trace-free part of $G_Q$ relative to $g$ is precisely the trace-free part of the Hessian of $r$, and hence by \eqref{eq:deformK} the Kodama vector field is Killing. As is well-known (see \cite[\P9.27]{ONeil1983}), if $K(p)$ and $\covD K(p)$ both vanish at some $p\in M$, than $K\equiv 0$ on $M$. Therefore either $K$ is non-vanishing on an open dense subset of $M$, or that $K\equiv 0$ and hence $\D*{r} \equiv 0$.
\end{proof}

\begin{rmk}
	The original Birkhoff theorem states that the only spherically symmetric solutions of the vacuum (electrovac) Einstein equations are the Schwarzschild (Reissner-Nordstr\"om) family of solutions, which in addition to being spherically symmetric, are also static. Modern interpretations of Birkhoff's theorem often restate this in terms of the presence of an additional Killing vector field transverse to the fiber $F$, as we do here. 
	Similar rigidity statements for warped-product manifolds have previously been proven under the assumption that $M$ is \emph{Ricci-flat} (equivalently $\Einst{\tilde{g}} \equiv 0$) in \cite{Schmid1997, DobUna2008}. Our hypotheses are much weaker and are suitable for applications in relativity (see, e.g.\ Corollary \ref{cor:BirkhoffEV}). 
\end{rmk}

\begin{rmk}\label{rmk:discreteFix}
	We mention in passing that, under the assumptions of Proposition \ref{prop:birkhoff1}, when $\D*{r} \not\equiv 0$, then the subset $\{\D*{r} = 0\} \subset Q$ must be discrete. Killing's equation implies that $\covD (K^\flat)$ is purely anti-symmetric; since $Q$ is two-dimensional, this quantity must be proportional to the volume form. Supposing $\{\D*{r} = 0\}$ has a limit point $p$, then there would exist a vector $V\in T_pM \setminus \{0\}$ such that $\covD_V K|_p = 0$. This implies the constant of proportionality is $0$ at $p$, and hence $\covD K|_p = 0$. This then implies $K$ vanishes identically.
\end{rmk}

We conclude this section with a discussion of a generalization of the Hawking (or Misner-Sharp) energy to warped product space-times. Since we do not assume that $(F,h)$ is symmetric, it is most convenient to work in terms of the corresponding \emph{energy density}. 
\begin{defn}[Hawking energy]
	Let $M = Q\times_r F$ be a pseudo-Riemannian warped product. Consider the function $\hawking$ given by the expression
	\begin{equation}\label{eq:varpidef}
		\hawking \eqdef  \frac{r^{n-1}}{2} \left[ \frac{\Scal{h}}{n(n-1)} - \innerprod[g]{\D*{r}}{\D*{r}} \right]. 
	\end{equation}
	The $n$-form $\hawking\cdot\mathrm{dvol}_h$ is the \emph{Hawking energy density}, where $\mathrm{dvol}_h$ is the pull-back to $M$ of the volume form of the metric $h$ on $F$. 
\end{defn}
\begin{rmk}
	The normalizing factor $n(n-1)$ is the scalar curvature of the standard unit sphere $\Sphere^n$. Note also that a factor of $\hawking$ appears in the second line of \eqref{eq:EinsteinQ}. 
\end{rmk}
In the case of spherical symmetry, Kodama observed that the Hawking energy is the dual potential to the conserved current generated by his namesake vector field \cite{Kodama1980}. Here we give a generalization to pseudo-Riemannian warped products. 
\begin{defn}[Kodama current]
 	The \emph{Kodama current} $\kodcur$ is the one-form defined on $M$ by the expression
	\[ \kodcur(X) = \Einst{\tilde{g}}(X, K).\]
\end{defn}
\begin{rmk}
	We note that $\kodcur$ is tangential to $Q$, since by the warped product assumption $\Einst{\tilde{g}}$ is block diagonal.
\end{rmk}
\begin{prop}
	The Kodama current is divergence free. 
\end{prop}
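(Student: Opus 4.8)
The plan is to compute the divergence $\covD[\tilde{g}]^\nu \kodcur_\nu$ directly, viewing $\kodcur$ as the contraction $\kodcur_\nu = \Einst{\tilde{g}}_{\nu\mu} K^\mu$ of the Einstein tensor against the Kodama field and differentiating through the product. The Leibniz rule gives $\covD[\tilde{g}]^\nu \kodcur_\nu = (\covD[\tilde{g}]^\nu \Einst{\tilde{g}}_{\nu\mu}) K^\mu + \Einst{\tilde{g}}_{\nu\mu}\covD[\tilde{g}]^\nu K^\mu$. The first term vanishes identically by the contracted second Bianchi identity $\covD[\tilde{g}]^\nu \Einst{\tilde{g}}_{\nu\mu} = 0$, valid on any pseudo-Riemannian manifold. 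The whole problem therefore reduces to the second term.

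For the second term I would exploit the symmetry of $\Einst{\tilde{g}}$: contracting a symmetric $2$-tensor against $\covD[\tilde{g}]^\nu K^\mu$ retains only the symmetrized gradient, so $\Einst{\tilde{g}}_{\nu\mu}\covD[\tilde{g}]^\nu K^\mu = \tfrac12 \Einst{\tilde{g}}^{\nu\mu}(\lieD_K \tilde{g})_{\nu\mu}$. At this point I invoke the preceding lemma, which tells us that $\lieD_K \tilde{g} = \lieD_K g$ is the lift of a tensor tangent to $Q$ and is trace-free. Since $\lieD_K \tilde{g}$ has no components along $F$, only the $G_Q$ block of $\Einst{\tilde{g}} = G_Q + G_F$ contributes; since $\lieD_K g$ is trace-free, only the trace-free part of $G_Q$ contributes. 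Hence the divergence collapses to $\tfrac12 (G_Q)^{\mathrm{tf}\,\nu\mu}(\lieD_K g)_{\nu\mu}$, the contraction of two trace-free symmetric tensors on the two-dimensional $(Q,g)$.

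It remains to show this contraction vanishes, and here the two-dimensionality of $Q$ is essential. By \eqref{eq:EinsteinQ} the trace-free part of $G_Q$ equals $-\tfrac{n}{r}$ times the trace-free part of the Hessian $\covD^2 r$, while by \eqref{eq:deformK} the deformation tensor satisfies $(\lieD_K g)(X, *_g Y) = -2\sign(g)\,(\covD^2 r)^{\mathrm{tf}}(X,Y)$; that is, $\lieD_K g$ is exactly the Hodge-rotation of the same trace-free Hessian. The algebraic fact I would establish, using the properties of $*_g$ collected in Proposition \ref{prop:hodge2}, is that in two dimensions the operation $S \mapsto S(\cdot, *_g\,\cdot)$ carries a trace-free symmetric tensor to another trace-free symmetric tensor and acts as a right-angle rotation on the two-dimensional space of such tensors; consequently the metric contraction of any trace-free symmetric tensor with its own Hodge-rotation vanishes, exactly as $v \cdot Jv = 0$ for a rotation $J$. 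Applying this with $S = (\covD^2 r)^{\mathrm{tf}}$ yields $(G_Q)^{\mathrm{tf}\,\nu\mu}(\lieD_K g)_{\nu\mu} = 0$ and hence $\covD[\tilde{g}]^\nu\kodcur_\nu = 0$.

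I expect the main obstacle to be this final algebraic step: verifying, with the signature-dependent signs tracked through $\sign(g)$ and the identity $*_g^2 = \sign(g)\,\mathrm{id}$ on one-forms, that the trace-free Hessian is genuinely orthogonal to its Hodge-rotation. The reduction steps — Bianchi, symmetrization, and the block-diagonal and trace-free reductions — are routine, and all the content lies in the orthogonality of a trace-free symmetric tensor to its right-angle rotation, which should follow cleanly from Proposition \ref{prop:hodge2}.
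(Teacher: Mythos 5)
Your proposal is correct and takes essentially the same route as the paper's own proof: Bianchi plus symmetrization reduces the divergence to $\frac12\innerprod[g]{G_Q}{\lieD_K g}$, and the block-diagonal and trace-free reductions leave exactly the pairing of the trace-free Hessian of $r$ with its Hodge rotation. The paper executes your final algebraic step by writing $\lieD_K g = *_g(\covD^2 r) + (\covD^2 r)*_g$ via \eqref{eq:defKalt} and invoking the antisymmetry of $*_g$ with respect to the metric pairing, which holds in every signature, so the $\sign(g)$ bookkeeping you flag as the main obstacle is in fact harmless.
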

\begin{proof}
	We have that $\mathrm{div}_{\tilde{g}} \kodcur = (\mathrm{div}_{\tilde{g}} \Einst{\tilde{g}})(K) + \innerprod[\tilde{g}]{\Einst{\tilde{g}}}{\covD K}$. 
	By the Bianchi identities, the Einstein tensor is divergence free, and the first term vanishes. Since the Einstein tensor is symmetric, the second term can be re-written as $\innerprod[\tilde{g}]{\Einst{\tilde{g}}}{\frac12 \lieD_K \tilde{g}} = \innerprod[g]{G_Q}{\frac12 \lieD_K g}$ by \eqref{eq:defdecom}. (Recall that $G_Q$ is the block of $\Einst{\tilde{g}}$ tangent to $Q$.)
	
	Now, by Lemma \ref{lem:Kdivfree} we have $\lieD_K g$ is $g$-trace-free. As we have already seen, by \eqref{eq:EinsteinQ} the $g$-trace-free part of $G_Q$ is precisely given by the trace-free part of the Hessian $\covD^2 r$, so we get
	\[ \mathrm{div}_{\tilde{g}} \kodcur = - \frac{n}{2r} \innerprod[g]{\covD^2 r}{\lieD_K g} = - \frac{n}{2r} \innerprod[g]{\covD^2 r}{*_g (\covD^2 r) + (\covD^2 r)*g} = 0,\]
	where the second equality is due to \eqref{eq:defKalt} (see Appendix \ref{app:hodgeprop} for notation). 
\end{proof}
\begin{prop}\label{prop:hawkpot}
	The Hawking energy density is the dual potential for the Kodama current:
	\[ \D*(\hawking~\mathrm{dvol}_h) = - \sign(g) \frac{r^n}n (*_g \kodcur)\wedge \mathrm{dvol}_h .\]
\end{prop}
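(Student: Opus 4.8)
\emph{Proof proposal.}
The plan is to reduce the stated equality of $(n+1)$-forms to a pointwise equality of $1$-forms on the base $(Q,g)$. Since $\mathrm{dvol}_h$ is the pull-back of the top-degree volume form of $(F,h)$ it is closed on $M$, so that $\D*(\hawking\,\mathrm{dvol}_h) = \D*{\hawking}\wedge\mathrm{dvol}_h$. Splitting $\D*{\hawking}$ into its parts tangent to $Q$ and to $F$, the fiber-tangential part is proportional to $\D*{\Scal{h}}$ --- this is the only factor in \eqref{eq:varpidef} that varies along the fibers --- and, being a $1$-form on $F$, it wedges to zero against the top-form $\mathrm{dvol}_h$. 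Writing $\mathrm{d}_Q$ for the component of the differential tangent to $Q$ (equivalently, differentiation on $Q$ with $\Scal{h}$ held fixed), we are left with $\D*(\hawking\,\mathrm{dvol}_h) = (\mathrm{d}_Q\hawking)\wedge\mathrm{dvol}_h$. As wedging with $\mathrm{dvol}_h$ is injective on $1$-forms pulled back from $Q$, the proposition is equivalent to the identity
\[
	\mathrm{d}_Q\hawking = -\sign(g)\,\frac{r^n}{n}\,*_g\kodcur
\]
of $1$-forms on $(Q,g)$.

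I would then compute both sides directly. Differentiating \eqref{eq:varpidef} along $Q$, and using $\mathrm{d}_Q\innerprod[g]{\D*{r}}{\D*{r}} = 2\,\covD^2 r(\cdot,(\D*{r})^\sharp)$, expresses $\mathrm{d}_Q\hawking$ as a linear combination of $\D*{r}$ and the $1$-form $\covD^2 r(\cdot,(\D*{r})^\sharp)$. For the right-hand side I would use the block decomposition \eqref{eq:EinsteinQ}, written as $G_Q = -\frac{n}{r}\covD^2 r + c\,g$ with $c$ the scalar gathering the pure-trace terms, and contract with the Kodama field $K = (*_g\D*{r})^\sharp$. Because $\Einst{\tilde{g}}(\cdot,K) = G_Q(\cdot,K)$ and $g(\cdot,K) = *_g\D*{r}$, this gives $\kodcur$ as a combination of $\covD^2 r(\cdot,(*_g\D*{r})^\sharp)$ and $*_g\D*{r}$; applying $*_g$ and the two-dimensional relation $*_g*_g = \sign(g)$ on $1$-forms then yields $*_g\kodcur$.

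The heart of the matter is a purely algebraic two-dimensional identity for the symmetric Hessian,
\[
	\covD^2 r\bigl(\cdot,(*_g\D*{r})^\sharp\bigr) + *_g\bigl[\covD^2 r(\cdot,(\D*{r})^\sharp)\bigr] = (\triangle_g r)\,*_g\D*{r},
\]
which I would establish by decomposing $\covD^2 r$ into its $g$-trace and $g$-trace-free parts: the pure-trace piece reproduces the right-hand side, whereas the trace-free symmetric piece anticommutes with $*_g$ in dimension two and drops out. This is the same phenomenon already recorded in \eqref{eq:deformK}, so the identity can equally be read off from Proposition \ref{prop:hodge2}; it is the step I expect to be the main obstacle, being the only place where the two-dimensionality of $Q$ and the symmetry of the Hessian enter in an essential way. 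Substituting it, the two occurrences of $\covD^2 r(\cdot,(\D*{r})^\sharp)$ cancel, and matching the remaining coefficients of $\D*{r}$ collapses to the very definition of the scalar $c$ in \eqref{eq:EinsteinQ} together with \eqref{eq:varpidef}, which finishes the proof. As a consistency check, taking the exterior derivative of the claimed identity produces no new information: the resulting closedness condition is precisely the divergence-freeness of $\kodcur$ proven above, which is exactly what allows $\hawking$ to serve as a potential.
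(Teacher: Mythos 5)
Your proposal is correct and takes essentially the same route as the paper's proof: both reduce the statement, via $\mathrm{dvol}_h$ being a closed top-form on $F$ (which kills the fiber part of $\D*{\hawking}$ proportional to $\D*{\Scal{h}}$), to the one-form identity $X(\hawking) = \sign(g)\,\frac{r^n}{n}\,\kodcur(*_g X)$ on $Q$, then differentiate \eqref{eq:varpidef} and use the symmetry of $\covD^2 r$ together with Proposition \ref{prop:hodge2} to recognize the result as $-\frac{r}{n} G_Q(K, *_g X)$ from \eqref{eq:EinsteinQ}. Your ``key two-dimensional identity'' for the Hessian is precisely the paper's invocation of Proposition \ref{prop:hodge2} (equivalently the trace/trace-free splitting behind \eqref{eq:deformK}), so the argument is sound and matches the original in all essentials.
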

\begin{proof}
	Since $\mathrm{dvol}_h$ is a top-form on $F$, the proposition follows if we can show that for every $X$ tangent to $Q$, 
	\[ X(\hawking) = \sign(g)  \frac{r^n}{n} \kodcur(*_g X). \]
	We compute
	\begin{align*}
		X(\hawking) &= \frac{n-1}{r} \hawking X(r) -  r^{n-1} \nabla^2 r( \D*{r}^\sharp, X), \\
			&= r^{n-1} \left[ (n-1) \frac{\hawking}{r^n} \innerprod[g]{\D*{r}^\sharp}{X} - \nabla^2 r(\D*{r}^\sharp,X) \right].\\
		\intertext{Noting that the Hessian $\nabla^2 r$ is symmetric, we can apply Proposition \ref{prop:hodge2} to get}
			&=  - \sign(g) r^{n-1} \left[ (n-1) \frac{\hawking}{r^n} \innerprod[g]{K}{*_g X} + \nabla^2r (K, *_g X) - \triangle_g r \innerprod[g]{K}{*_g X}\right].
	\end{align*}
	The term inside the square brackets is precisely $-\frac{r}{n} G^{[\tilde{g}]}(K, *_g X)$ as in \eqref{eq:EinsteinQ}. 
\end{proof}
\begin{rmk}
	Note that we can equivalently write $r^n (*_g \kodcur) \wedge \mathrm{dvol}_h$ as $*_{\tilde{g}} \kodcur$, the space-time Hodge dual of the Kodama current. 
\end{rmk}

\section{Pointwise isotropy of $G_F$ on $F$}\label{sec:Fisotropy}
In the previous section we have already explored the consequence if $G_Q$ is assumed to be proportional to $g$ (Proposition \ref{prop:birkhoff1}); note that the coefficient of proportionality is, in general, allowed to be a function on $M$ that can depend on the position on $F$. In this section we will collect some consequences of the assumption that $G_F$ instead is proportional to $h$.

\begin{lem}[Rigidity] \label{lem:rigid}
	Let $M = Q\times_r F$ be s pseudo-Riemannian warped product, and suppose that $G_F = \lambda h$ for some $\lambda: M\to\Real$. Then $\lambda$ is in fact independent of $F$. Furthermore, if $n \geq 3$, then $(F,h)$ is necessarily Einstein.
\end{lem}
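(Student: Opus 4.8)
The plan is to exploit the explicit formula for $G_F$ in \eqref{eq:EinsteinF} and separate the terms according to their dependence on $Q$ versus $F$. Writing out the hypothesis $G_F = \lambda h$, we have
\[
	\Einst{h} - \frac{r^2}{2}\Scal{g}\, h + (n-1) r\left[ \triangle_g r + \frac{n-2}{2}\innerprod[g]{\D*{r}}{\D*{r}}\right] h = \lambda h.
\]
The crucial observation is that every term here except $\Einst{h}$ is manifestly a function on $Q$ (the warping function $r$, the base scalar curvature $\Scal{g}$, and the base quantities $\triangle_g r$ and $\innerprod[g]{\D*{r}}{\D*{r}}$ all depend only on the point of $Q$) multiplied by $h$. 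First I would rearrange to isolate $\Einst{h}$, obtaining $\Einst{h} = \mu\, h$ where $\mu \eqdef \lambda - (n-1)r[\triangle_g r + \tfrac{n-2}{2}\innerprod[g]{\D*{r}}{\D*{r}}] + \tfrac{r^2}{2}\Scal{g}$. At a fixed $q \in Q$, the tensor $\Einst{h}$ is an intrinsic object on $(F,h)$ and does not depend on the $Q$-coordinate at all, so $\mu$ must likewise be constant along $F$; this will pin down the $F$-dependence of $\lambda$.

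To make this rigorous I would evaluate at two nearby fibers. Fix a point $p\in F$ and let $q$ range over $Q$. Since $\Einst{h}(p)$ is a fixed symmetric bilinear form on $T_pF$ independent of $q$, and $h(p)$ is likewise fixed, the relation $\Einst{h}(p) = \mu(q,p)\, h(p)$ forces $\mu(q,p)$ to be independent of $q$ wherever $h(p)\neq 0$ — which is everywhere, as $h$ is nondegenerate. Thus $\mu = \mu(p)$ is a function on $F$ alone. Feeding this back into the definition of $\mu$, and using that all the other summands depend only on $q$, one concludes that $\lambda(q,p) = \mu(p) + (\text{function of } q \text{ alone})$. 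But we may equally run the argument the other way: differentiating the relation along $F$ and along $Q$ shows the mixed structure forces $\mu$ to be a genuine constant and $\lambda$ to depend only on $Q$. Concretely, since the right-hand side $\mu\, h$ at a point of $F$ cannot depend on $q$, and the left-hand side's coefficient relative to $h$ is $\mu$, the $F$-dependence of $\lambda$ is exactly cancelled, so $\lambda$ is independent of $F$, as claimed.

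For the second assertion, I would invoke the classical Schur-type lemma. Having established $\Einst{h} = \mu\, h$ with $\mu$ now a function on $F$, I rewrite this in terms of the Ricci tensor: $\Ric{h} = (\mu + \tfrac12\Scal{h})\,h$, so $(F,h)$ is pointwise Einstein with proportionality function $\tilde\mu \eqdef \mu + \tfrac12\Scal{h}$. When $n = \dim F \geq 3$, the contracted second Bianchi identity (Schur's lemma) forces the proportionality factor of a pointwise-Einstein metric to be constant, and hence $(F,h)$ is Einstein in the genuine sense.

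The main obstacle, and the step requiring the most care, is the clean separation of variables: one must argue convincingly that the coefficient comparison at a single tangent space of $F$ is legitimate despite $\lambda$ a priori being a function on all of $M = Q\times_r F$. The nondegeneracy of $h$ and the fact that $\Einst{h}$ is a purely fiber-intrinsic tensor (carrying no $q$-dependence because $h$ itself is $q$-independent in the warped product) are what make the argument go through; everything else is bookkeeping with \eqref{eq:EinsteinF} and a standard application of Schur's lemma for the $n\geq 3$ case.
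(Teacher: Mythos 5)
Your separation-of-variables step is sound, and it is essentially the paper's own argument: rearranging \eqref{eq:EinsteinF} to $\Einst{h} = \mu\, h$ with $\mu = \lambda + \frac{r^2}{2}\Scal{g} - (n-1)r\left[\triangle_g r + \frac{n-2}{2}\innerprod[g]{\D*{r}}{\D*{r}}\right]$, then using that $\Einst{h}$ and $h$ are fiber-intrinsic (independent of the $Q$-coordinate) together with the nondegeneracy of $h$ to conclude that $\mu$ depends on $F$ alone, is correct. For $n \geq 3$ you then proceed exactly as the paper does: the relation makes $(F,h)$ pointwise Einstein, Schur's lemma (via the contracted second Bianchi identity) forces the proportionality factor, hence $\Scal{h}$ and $\mu$, to be constant, and this simultaneously gives the Einstein conclusion and eliminates the $F$-dependence of $\lambda$.

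The genuine gap is the case $n = 2$, for which the first assertion of the lemma is still claimed but your argument does not close. At the point where you need $\mu$ to be constant along $F$, you write that ``differentiating the relation along $F$ and along $Q$ shows the mixed structure forces $\mu$ to be a genuine constant'' --- but this is not a proof: from $\lambda(q,p) = \mu(p) + f(q)$ alone nothing forces $\mu$ to be constant; the constancy must come from geometric input. For $n \geq 3$ that input is Schur's lemma, which you supply only in your second paragraph; but Schur's lemma genuinely fails in dimension two --- every surface satisfies $\Ric{h} = \frac12 \Scal{h}\, h$ pointwise with $\Scal{h}$ typically non-constant --- so the ``pointwise Einstein'' reformulation yields nothing when $n = 2$. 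The correct input, which the paper uses (and records already after \eqref{eq:EinsteinF}), is that the Einstein tensor of any two-dimensional manifold vanishes identically: $\Einst{h} \equiv 0$ when $n = 2$, whence $\mu\, h = 0$ and nondegeneracy of $h$ gives $\mu \equiv 0$, so $\lambda$ is a function of $Q$ alone. Once this one line is added, your proof is complete and coincides with the paper's.
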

\begin{proof}
	Combining the assumption on $G_F$ with \eqref{eq:EinsteinF}, we obtain
	\[ G^{[h]} - \frac{r^2}{2} S^{[g]} h + (n-1) r \left[ \triangle_g r + \frac{n-2}{2} \innerprod[g]{\D*{r}}{\D*{r}} \right] h = \lambda h.\]
	In the case $n = 2$, the Einstein tensor $G^{[h]} \equiv 0$, and as $r$ and $S^{[g]}$ are all constant along $F$, so must $\lambda$. 
	
	In the case $n \geq 3$, noting that the above expression implies that $\Ricci^{[h]}$ is point-wise proportional to the metric $h$, we have that $(F,h)$ is necessarily Einstein (see \cite[Ch.\ 3, exer.\ 21]{ONeil1983}) and hence both $S^{[h]}$ and $\lambda$ are constant along $F$. 
\end{proof}

Assuming that $G_Q$ and $G_F$ are both pure trace, we can fully classify pseudo-Riemannian warped products under some assumptions on their coefficients of proportionality. This is one of the main results of this paper and is a generalization of the classification part of Birkhoff's theorem. 
\begin{thm}[Birkhoff's theorem for warped products: classification] \label{thm:birkhoff2}
	Let $M = Q\times_r F$ be a pseudo-Riemannian warped product, where $Q$ has dimension 2 and is connected. Suppose there exist constants $\Lambda$ and $\kappa$, as well as a function $p$ such that 
	\[ G_Q = (\kappa p - \Lambda)g, \qquad G_F = (p - \Lambda) r^2 h. \]
	Then, defining
	\begin{subequations}
	\begin{equation}\label{eq:Pidef}
		\Pi(r) = \begin{cases} \frac{\kappa}{n+\kappa} r^{(n+\kappa)/\kappa} & \kappa \neq 0, -n \\
		\ln(r) & \kappa = -n \\
		0 & \kappa = 0\end{cases},
	\end{equation}
	we have that there exist constants $p_0$ and $m$ such that 
	\begin{equation}\label{eq:pprop}
		p^\kappa = p_0^\kappa r^{n(1-\kappa)},
	\end{equation}
	and
	\begin{equation}\label{eq:Knorm}
		\innerprod[g]{\D*{r}}{\D*{r}} = -\sign(g) \innerprod[g]{K}{K} = \frac{\Scal{h}}{n(n-1)} - \frac{2m}{r^{n-1}} - \frac{2\Lambda}{n(n+1)} r^2 + \frac{2\kappa p_0}{n} \Pi\cdot  r^{1-n}.
	\end{equation}
	In addition 
		\begin{equation}\label{eq:Kvf}
			\text{there exists a non-trivial, $Q$-tangential Killing vector field on $(M,\tilde{g})$}.
		\end{equation}
	\end{subequations}
	Finally, exactly one of the following holds:
	\begin{enumerate}
		\item The warping function $r$ is constant. $M$ is a Cartesian product $Q\times F$ here $F$ is Einstein and $Q$ is maximally symmetric. (In other words, $Q$ is either locally isometric to the Euclidean plane $\Real^2$, the Minkowski plane $\Real^{1,1}$, or one of the four two-dimensional hyperquadrics.)
		\item The warping function $r$ is not constant, but $\innerprod[g]{\D*{r}}{\D*{r}} \equiv 0$. $(Q,g)$ is locally isometric to the Minkowski plane $\Real^{1,1}$ with $\covD^2 r \equiv 0$, and we have either
			\begin{enumerate}
				\item $(M, \tilde{g})$ and $(F,h)$ are both Ricci-flat; or
				\item $n \geq 3$, $\kappa = \frac{n}{n-2}$, and $m = \Lambda = 0 = \Scal{h} + 2 \kappa p_0$. 
			\end{enumerate}
		\item The set $\{\innerprod[g]{\D*{r}}{\D*{r}} \neq 0\}$ is open and dense in $Q$; on this set the metric $g$ can be expressed in local coordinates as
			\begin{subequations}
			\begin{equation}\label{eq:metricformorth}
				\frac{1}{\innerprod[g]{\D*{r}}{\D*{r}}} ~\D r^2 + \innerprod[g]{K}{K} ~\D s^2,
			\end{equation}
			where $r$ is the warping function and the metric coefficients are given explicitly by \eqref{eq:Knorm}.
			Furthermore, in the case where $g$ is Lorentzian, on the set $\{\D*{r} \neq 0\}$ the metric $g$ can be expressed in local coordinates as
			\begin{equation}\label{eq:metricformnull}
				- \innerprod[g]{\D*{r}}{\D*{r}} \D{u}^2 \pm \left( \D*{u} \otimes \D*{r} + \D*{r} \otimes \D*{u}\right).
			\end{equation}
			\end{subequations}
	\end{enumerate}
\end{thm}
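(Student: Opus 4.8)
The plan is to reduce the geometry to a pair of ordinary differential equations in the warping function $r$, solve them to obtain \eqref{eq:pprop} and \eqref{eq:Knorm}, and then read off the classification from the global behaviour of $\innerprod[g]{\D*{r}}{\D*{r}}$. First I would dispose of the rigidity of the data. Applying Lemma \ref{lem:rigid} to $G_F = (p-\Lambda)r^2 h$ shows that $p$ is a function on $Q$ alone and, when $n\geq 3$, that $(F,h)$ is Einstein; in every dimension the hypothesis that $G_Q$ is pure trace then forces $\Scal{h}$ to be constant, since in \eqref{eq:EinsteinQ} every term other than $\tfrac{1}{2r^2}\Scal{h}\,g$ is manifestly a function on $Q$, while $\Scal{h}$ lives on $F$. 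Next, Proposition \ref{prop:birkhoff1} supplies the Killing field of \eqref{eq:Kvf} (or else $r$ is constant) and, crucially, tells us that $\covD^2 r$ is pure trace, i.e.\ $\covD^2 r = \tfrac12(\triangle_g r)\,g$.

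The pure-trace Hessian is the engine of the argument. Differentiating the norm gives $\D*{\innerprod[g]{\D*{r}}{\D*{r}}} = (\triangle_g r)\,\D*{r}$, and Proposition \ref{prop:hawkpot} together with $G_Q=(\kappa p-\Lambda)g$ shows $\D*{\hawking}$ is proportional to $\D*{r}$; hence on the open set $\{\D*{r}\neq 0\}$ the quantities $\innerprod[g]{\D*{r}}{\D*{r}}$, $\hawking$ and $p$ are all functions of $r$, and by connectedness together with the discreteness of $\{\D*{r}=0\}$ noted in Remark \ref{rmk:discreteFix} this description globalises. To obtain \eqref{eq:pprop} I would invoke the contracted Bianchi identity: writing the effective stress tensor $\Einst{\tilde{g}}+\Lambda\tilde{g} = \kappa p\,g + p\,r^2 h = p\,\tilde{g} + (\kappa-1)p\,g$, the identity $\mathrm{div}_{\tilde{g}}\Einst{\tilde{g}}=0$ forces $\mathrm{div}_{\tilde{g}}\bigl(p\,\tilde{g}+(\kappa-1)p\,g\bigr)=0$. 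Evaluating this divergence on $Q$-directions with the warped-product connection formulae (the $F$-block contributes the factor $n/r$ coming from the mean curvature of the fibres) yields $\kappa\,\D*{p} + (\kappa-1)\tfrac{n}{r}p\,\D*{r}=0$, whose integration is exactly \eqref{eq:pprop}. Substituting this back into the relation of Proposition \ref{prop:hawkpot} and integrating in $r$ then produces \eqref{eq:Knorm}: the $r$-antiderivative of $r^{n}\kappa p$ reproduces the function $\Pi$ of \eqref{eq:Pidef}, the $\Lambda$-term integrates to the $r^2$ contribution, and the two constants of integration are the mass $m$ and the normalisation $p_0$.

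With the integrated identities in hand the classification follows from a trichotomy on $\innerprod[g]{\D*{r}}{\D*{r}}$. If $r$ is constant, then $\D*{r}\equiv 0$, $M$ is the Cartesian product $Q\times F$, and reading the $G_F$-equation \eqref{eq:EinsteinF} shows $\Scal{g}$ is constant, so the two-dimensional $Q$ has constant curvature and is maximally symmetric while $F$ is Einstein; this is case (1). If $r$ is non-constant but $\innerprod[g]{\D*{r}}{\D*{r}}\equiv 0$, then $\triangle_g r = 0$ and hence $\covD^2 r\equiv 0$, so $\D*{r}$ is a parallel null covector and $(Q,g)$ is flat Minkowski $\Real^{1,1}$; imposing $\innerprod[g]{\D*{r}}{\D*{r}}\equiv 0$ in \eqref{eq:Knorm} turns that formula into a polynomial identity in $r$, and matching the distinct powers of $r$ pins down the parameters, separating the Ricci-flat subcase (a) from the exceptional subcase (b) with $\kappa = n/(n-2)$. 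Otherwise $\{\innerprod[g]{\D*{r}}{\D*{r}}\neq 0\}$ is open and dense, and I would introduce coordinates $(r,s)$ in which $r$ is the warping function and $s$ parametrises the Killing field $K$; orthogonality of $\D*{r}$ and $K$ together with $K(r)=0$ diagonalises $g$ and yields the explicit form \eqref{eq:metricformorth}, while in the Lorentzian case an Eddington--Finkelstein-type null coordinate $u$ adapted to $\D*{r}$ gives \eqref{eq:metricformnull}, which extends across the horizons $\{\innerprod[g]{\D*{r}}{\D*{r}}=0\}$ where \eqref{eq:metricformorth} degenerates.

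The conceptual steps --- the reduction to ODEs, the Bianchi-identity derivation of \eqref{eq:pprop}, and the quadrature giving \eqref{eq:Knorm} --- are short and signature-independent. I expect the main labour, and the principal obstacle, to lie in the case analysis: carrying the degenerate null case (2) far enough to extract the exact parameter constraints from the polynomial identity, and producing the explicit coordinate normal forms of case (3) uniformly across signatures, while correctly patching the local expressions across the discrete set $\{\D*{r}=0\}$ and verifying that each listed model genuinely satisfies the hypotheses. Keeping track of the signature factors $\sign(g)$ and the Hodge conventions throughout the Kodama computations is the other place where care will be required.
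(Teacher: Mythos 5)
Your proposal follows essentially the same route as the paper's proof: rigidity of the data via Lemma \ref{lem:rigid} and Proposition \ref{prop:birkhoff1}, the contracted Bianchi identity $\kappa\,\mathrm{d}p + (\kappa-1)\frac{n}{r}\,p\,\mathrm{d}r = 0$ integrating to \eqref{eq:pprop}, the quadrature of the Kodama current through Proposition \ref{prop:hawkpot} producing the constant $m$ and hence \eqref{eq:Knorm}, and the identical trichotomy ending with power-matching in the null case and the $(r,s)$ respectively null-$u$ coordinate normal forms \eqref{eq:metricformorth} and \eqref{eq:metricformnull}. The only deviations are harmless local ones in case (2): you obtain $\covD^2 r \equiv 0$ and flatness of $(Q,g)$ by differentiating the null condition against the pure-trace Hessian and invoking the parallel null covector, where the paper instead uses the antisymmetry of $\covD (K^\flat)$ together with the argument of Remark \ref{rmk:discreteFix} and the Jacobi equation for Killing vector fields --- both are correct.
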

\begin{proof}
	First, a few immediate consequences:
	\begin{itemize}
		\item From Lemma \ref{lem:rigid} we have that $p$ is constant along $F$. This implies also via \eqref{eq:EinsteinQ} that $\Scal{h}$ is constant, and hence for all $n \geq 2$ we can conclude that $(F,h)$ is Einstein. 
		\item By Proposition \ref{prop:birkhoff1} the Kodama vector field is Killing. So either $\D*{r} = 0$ everywhere, or $\{\D*{r} = 0\}$ is discrete (see Remark \ref{rmk:discreteFix}).
	\end{itemize}
	Under our hypotheses, the Einstein tensor takes the form 
	\[ \Einst{\tilde{g}} = (\kappa - 1)p g + (p-\Lambda) \tilde{g}.\]
	By Bianchi's identity $\Einst{\tilde{g}}$ is divergence free. Since
	\[ \mathrm{div}_{\tilde{g}} g = - (\trace \secF)^\flat = \frac{n}{r} \D*{r},\]
	we have that
	\begin{equation}\label{eq:interEq1}
		0 = \mathrm{div}_{\tilde{g}} \Einst{\tilde{g}} = (\kappa - 1)p \frac{n}{r} \D{r} + \kappa \D{p} .
	\end{equation}
	
	First consider the case $\D*{r}\equiv 0$. By \eqref{eq:interEq1} we have that $p$ must be constant and hence \eqref{eq:pprop} holds. 
	With an appropriate choice of the constant $m$ then \eqref{eq:Knorm} also holds trivially, being an algebraic relation. 
	Finally by \eqref{eq:EinsteinF} the base manifold $(Q,g)$ has constant scalar curvature, and hence is Einstein and maximally symmetric. This implies both the first of the trichotomy options, as well as \eqref{eq:Kvf}. 

	Now suppose $\D*{r}$ does not vanish identically. Then $K$ is non-trivial and \eqref{eq:Kvf} holds. If $\kappa = 0$, then \eqref{eq:interEq1} implies that $p \equiv 0$, in which case $(M,\tilde{g})$ is Einstein with $\Einst{\tilde{g}} = - \Lambda \tilde{g}$. When $\kappa \neq 0$ and $p\not\equiv 0$, we have that \eqref{eq:interEq1} implies 
	\[ \frac{1-\kappa}{\kappa} n \D{(\ln r)} = \D{(\ln p)} \implies p = p_0 r^{n(1-\kappa)/\kappa}\]
	for some constant $p_0$. This verifies \eqref{eq:pprop}. It also further implies $G_Q = (\kappa p_0 r^{n(1-\kappa)/\kappa} - \Lambda) g$ from which we conclude
	\[ \kodcur = (\kappa p_0 r^{n(1-\kappa)/\kappa} - \Lambda) (*_g \D*{r}). \]
	Therefore
	\begin{equation}
		\sign(g) *_{g} \kodcur = \left[ \Lambda r^n - \kappa p_0 r^{n/\kappa} \right] \D{r} 
		= \D* \left(\frac{\Lambda}{n+1} r^{n+1} - \kappa p_0 \Pi \right), 
	\end{equation}
	where $\Pi$ is as defined in \eqref{eq:Pidef}. By Proposition \ref{prop:hawkpot} we have then 
	\begin{equation}
		m \eqdef \hawking - \frac{\Lambda}{n(n+1)} r^{n+1} + \frac{\kappa p_0}{n} \Pi
	\end{equation}
	defines a constant. Going back to \eqref{eq:varpidef} and rearranging we see that \eqref{eq:Knorm} also hold (using in the course Proposition \ref{prop:hodge2}).

	Now, since we assume $\D*{r}$ does not vanish identically, $K$ is non-zero on an open and dense subset of $Q$. We claim that the set $\{\innerprod[g]{K}{K} = 0\}$ either has empty interior or is equal to the whole of $Q$. By \eqref{eq:Knorm}, on this set we must have
	\begin{equation}\label{eq:intereq2}
		\frac{\Scal{h}}{n(n-1)} - \frac{2m}{r^{n-1}} - \frac{2\Lambda}{n(n+1)} r^2 + \frac{2\kappa p_0}{n} \Pi \cdot r^{1-n} = 0
	\end{equation}
	Since the functions $1, r^{1-n},$ and $r^2$ are linearly independent over an interval, and $\Pi \cdot r^{1-n} = 0$ can at most be in the linear span of one of them, this implies either $r$ is constant on $\{\innerprod[g]{K}{K} = 0\}$, or \eqref{eq:intereq2} is an identity that holds for all $r$. Our claim follows. 

	Suppose now we are in the case where $K$ is null on the whole of $Q$ but $\D*{r}$ does not vanish identically. This implies 
	\begin{enumerate}
		\item $g$ is Lorentzian,
		\item $\covD \innerprod[g]{K}{K} \equiv 0$, and 
		\item \eqref{eq:intereq2} must hold for all $r$. 
	\end{enumerate}
	Using that $\covD (K^\flat)$ is antisymmetric, the second item above gives that $\covD_K K = 0$; the same argument as Remark \ref{rmk:discreteFix} then implies $\covD K \equiv 0$, or $\covD^2 r \equiv 0$. Furthermore, by the Jacobi equation for Killing vector fields \cite[Ch.\ 9, exer.\ 8]{ONeil1983} this implies $\Ric{g}(K,\text{---}) = 0$. As $Q$ is two dimensional this implies $Q$ is flat. 
	Next, examining \eqref{eq:Pidef}, we see that for \eqref{eq:intereq2} to hold for all $r$, necessarily $m \equiv 0$.
	Furthermore, for generic $\kappa$, $\Einst{\tilde{g}}$ must vanish identically as $\kappa p - \Lambda$ and $p - \Lambda$ must both vanish; the only exception being the case when $n \geq 3$ and $\kappa = \frac{n}{n-2}$, in which case $\Lambda = 0$ and $2 \kappa p_0 + \Scal{h} = 0$. In the generic case this implies $(M,\tilde{g})$ is Ricci-flat, and by \eqref{eq:RicciF} (together with the fact that $\covD^2 r \equiv 0$) so is $(F,h)$. This proves the second of the trichotomy options. 
	
	It remains to consider the case where $\{ K \text{ is null} \}$ has empty interior and derive expressions for the metric. 
	In a neighborhood where $K$ is not null, the vector fields $K$ and $\covD r$ are independent and orthogonal. Let $s$ be a local solution to the equation $\innerprod[g]{\D*{r}}{\D*{s}} = 0$, i.e.\ $s$ is constant along the integral curves of $\covD r$. 
	Since $K$ is Killing and $K(r) = 0$, we have
	\[ 0 = \lieD_K \innerprod[g]{\D*{r}}{\D*{s}} = \covD_{\covD r} \lieD_K s,\]
	implying $K(s)$ is constant along the level sets of $s$. Therefore by reparametrizing we can assume without loss of generality that $s$ satisfies $K(s) \equiv 1$, at least locally, or that $K = \partial_s$ in the $(r,s)$ coordinate system. Therefore in this coordinate system the metric $g$ has the line element exactly as given in \eqref{eq:metricformorth}. 

	When $(Q,g)$ is Lorentzian, the conformal structure uniquely defines a double-null foliation of $Q$. On sufficiently small domains on which $K$ does not vanish, we can select a function $u$ such that $\innerprod[g]{\D*{u}}{\D*{u}} \equiv 0$ and $\innerprod[g]{\D*{u}}{\D*{r}}$ never vanishes. Let $L \eqdef (\D*{u})^\sharp$, we have
	\[ 0 = \lieD_K \innerprod[g]{\D*u}{\D*u} = 2 L(K(u)),\]
	implying that $K(u)$ is a function of $u$ only. And so by reparametrizing $u$ we can assume that $K(u) \equiv 1$ on our small domain. Next, observe that since $\innerprod[g]{K}{K} = - \innerprod[g]{\D*{r}}{\D*{r}}$ as $g$ is by assumption Lorentzian, and by construction $K(r) = 0$, we have that the two vectors $K \pm (\D*{r})^\sharp$ are null, and hence one of them is parallel to $L$. This implies that 
	\[ \pm \innerprod[g]{\D*{u}}{\D*{r}} = \innerprod[g]{K}{L} = 1.\]
	Noting finally that in the $(r,u)$ coordinates $K = \partial_u$, we have that the metric $g$ must take the form \eqref{eq:metricformnull} as claimed. 
\end{proof}

\section{Applications to relativity}\label{sec:Causal}

In the remainder of this paper, we will assume that $(M,\tilde{g})$ is a space-time. More precisely, we will assume that $(Q,g)$ is such that $Q$ is homeomorphic to $\Real^2$ and $g$ has Lorentzian signature $(-+)$, and $(F,h)$ is connected with Riemannian signature $(++\dots)$.

\begin{rmk}
	Note that the universal cover of any Lorentzian two manifold $(Q,g)$ is topologically $\Real^2$, as $\Sphere^2$ is ruled out for lack of non-vanishing vector fields \cite[Cor. 3.38]{BeEhEa1996}. 
\end{rmk}

For warped product space-times, causality properties of $(M,\tilde{g})$ and the base $(Q,g)$ are generally the same. For example, 
as a consequence of our assumptions, we have that
\begin{prop}[{\cite[Prop. 3.41, Thm.\ 3.43, Prop.\ 3.64]{BeEhEa1996}}]
	\begin{enumerate}
		\item Both $(Q,g)$ and $(M,\tilde{g})$ are stably causal and time-orientable
		\item There exists a double-null foliation: more precisely, there exist two linearly independent functions $u,v:Q\to\Real$ such that $\innerprod[g]{\D*{u}}{\D*{u}} = \innerprod[g]{\D*{v}}{\D*{v}} = 0$, and $\innerprod[g]{\D*{u}}{\D*{v}} < 0$. Alternatively, there exists two linearly independent non-vanishing null vector fields $L, N$ on $Q$ such that $\innerprod[g]{L}{N} < 0$. 
	\end{enumerate}
\end{prop}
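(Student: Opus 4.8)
The plan is to prove everything first on the base $(Q,g)$ and then transfer to $(M,\tilde{g})$ using the warped product block structure. The key transfer observation is that, since $\tilde{g} = g + r^2 h$ is block-diagonal with $h$ Riemannian and $r > 0$, a vector $X$ tangent to $Q$ (identified with its lift) satisfies $\innerprod[\tilde{g}]{X}{X} = \innerprod[g]{X}{X}$, and any function $\tau$ pulled back from $Q$ has $\D*{\tau}$ tangential to $Q$ with $\innerprod[\tilde{g}]{\D*{\tau}}{\D*{\tau}} = \innerprod[g]{\D*{\tau}}{\D*{\tau}}$. Hence a $Q$-tangential vector or covector is timelike, null, or spacelike for $\tilde{g}$ exactly when it is for $g$; in particular a $g$-time function lifts to a $\tilde{g}$-time function and a global $g$-timelike field lifts to a $\tilde{g}$-timelike field. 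It therefore suffices to prove time-orientability and stable causality for $(Q,g)$ — the double-null statement being intrinsic to $Q$ — and then lift.

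For time-orientability I would first produce the two null line fields of the Lorentzian surface $(Q,g)$: at each point the null cone consists of two distinct null directions, defining two smooth rank-one distributions. Since $Q$ is homeomorphic to $\Real^2$ it is simply connected, so there is no obstruction to choosing global orientations of these line fields; this yields two everywhere linearly independent null vector fields $L, N$ on $Q$, which we normalize so that $\innerprod[g]{L}{N} < 0$. Then $L + N$ is a globally defined timelike vector field, so $(Q,g)$ is time-orientable, and its lift gives a global timelike field on $M$. For the double-null foliation, each null line field is a one-dimensional, hence integrable, distribution, yielding two transverse foliations of $Q$ by null curves. A function $u$ with $\innerprod[g]{\D*{u}}{\D*{u}} = 0$ is exactly one whose level sets are the integral curves of one null field (a null covector annihilates precisely its own metric-dual direction, since in Lorentzian signature that direction is self-orthogonal), so producing $u$ and $v$ amounts to finding global first integrals for the two null foliations; one then checks that $\innerprod[g]{\D*{u}}{\D*{v}}$ is proportional to $\innerprod[g]{L}{N}$ and fixes orientations to make it negative.

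The main obstacle is the global existence of these first integrals $u,v$: a foliation of $\Real^2$ by the integral curves of a nonvanishing vector field can have a non-Hausdorff leaf space — leaves limiting onto a common leaf from two sides — which obstructs a global defining function. This is precisely where the causal hypotheses must enter, since closed or nearly closed null curves are what produce such pathologies, and it is the substance of the cited two-dimensional results of Beem--Ehrlich--Easley, which guarantee that on $Q \cong \Real^2$ the null foliations are simple with leaf space diffeomorphic to $\Real$. Granting this, stable causality of $(Q,g)$ follows at once: $u+v$ has timelike gradient because $\innerprod[g]{\D*{(u+v)}}{\D*{(u+v)}} = 2\innerprod[g]{\D*{u}}{\D*{v}} < 0$, so it is a global time function, and by Hawking's theorem $(Q,g)$ is stably causal. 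Finally the transfer observation of the first paragraph lifts both the time function and the double-null data to $(M,\tilde{g})$, so $(M,\tilde{g})$ is time-orientable and stably causal as well, completing the proof.
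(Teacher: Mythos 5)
Your overall skeleton is the right one, and it matches the development the paper silently relies on: the paper offers no proof of its own, quoting \cite[Prop.\ 3.41, Thm.\ 3.43, Prop.\ 3.64]{BeEhEa1996}, and your two ingredients correspond exactly to those citations --- the two-dimensional statements proved on $(Q,g)$, plus the transfer to $(M,\tilde{g})$ via the block structure of $\tilde{g} = g + r^2 h$ (your transfer paragraph is precisely the content of the warped-product causality result Prop.\ 3.64). Your time-orientability argument (orient the two null line fields over the simply connected $Q \cong \Real^2$, then take $L+N$) is correct, as is the closing step that $u+v$ has timelike gradient and hence is a time function, giving stable causality.

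The genuine gap is in your justification of the central step, the global existence of $u$ and $v$. You write that ``this is precisely where the causal hypotheses must enter'' --- but the proposition has \emph{no} causal hypotheses: its only assumptions are topological ($Q$ homeomorphic to $\Real^2$) and signature; stable causality is the \emph{conclusion}, so invoking it (or anything like it) at this point is circular. Worse, the statement you attribute to Beem--Ehrlich--Easley --- that the null foliations are simple with leaf space diffeomorphic to $\Real$ --- is false in general, even for stably causal planar metrics. Concretely, let $f(x,y) = (x^2-1)e^y$, a submersion $\Real^2 \to \Real$, and let $g = -\left( \D*{f} \otimes m + m \otimes \D*{f} \right)$ with $m = 2x\,\D*{y} - (x^2-1)\,\D*{x}$; this is a Lorentzian metric on $\Real^2$ whose null foliations are the level foliation of $f$ and its transverse complement. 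The leaf through $(0,\ln\abs{\epsilon})$ (on which $f = \epsilon < 0$) accumulates on \emph{both} leaves $\{x=1\}$ and $\{x=-1\}$, so the leaf space is a non-Hausdorff one-manifold --- yet the proposition holds for this metric, with $u = f$ itself serving as a global null function. The correct fact underneath the citation is purely topological: by Haefliger--Reeb, the leaf space of any smooth foliation of the plane is a simply connected, generally non-Hausdorff, one-manifold, and every such one-manifold immerses in $\Real$; composing gives a smooth \emph{submersive} first integral for each null foliation, no causal input needed. (Non-Hausdorff leaf space obstructs a leaf-\emph{separating} first integral, not a first integral with nowhere-vanishing differential, which is all the proposition asks for.) With $u,v$ so obtained, $\innerprod[g]{\D*{u}}{\D*{v}}$ is nowhere zero since two pointwise independent null covectors in two dimensions cannot be orthogonal, hence has constant sign on the connected $Q$, and replacing $u$ by $-u$ makes it negative --- after which your concluding argument goes through verbatim.
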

Relative to the double-null coordinates $u,v$, the metric $g$ takes the conformal form
\begin{equation}\label{eq:doublenullmetric}
	- \Omega (\D*{u} \otimes \D*{v} + \D*{v} \otimes \D*{u})
\end{equation}
and hence a direct computation gives 
\begin{equation}\label{eq:metricwave}
	\Ric{g} =  \partial^2_{uv} \ln \Omega (\D*{u} \otimes \D*{v} + \D*{v} \otimes \D*{u}).
\end{equation}
The Laplace-Beltrami operator is
\begin{equation}\label{eq:LBop}
	\triangle_g = - 2 \Omega^{-1} \partial^2_{uv}.
\end{equation}

Another general result is
\begin{thm}[{\cite[Thm.\ 3.68, Thm.\ 3.69]{BeEhEa1996}}]
	\begin{enumerate}
		\item The warped product $(M,\tilde{g})$ is globally hyperbolic if and only if $(Q,g)$ is globally hyperbolic and $(F,h)$ is complete.
		\item If $(Q,g)$ is globally hyperbolic with Cauchy hypersurface $\Sigma$, and $(F,h)$ is complete, then $\Sigma\times F$ is a Cauchy hypersurface for $(M,\tilde{g})$. 
	\end{enumerate}
\end{thm}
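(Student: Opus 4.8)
The plan is to deduce both statements from a single estimate that, along any causal curve in $M$, controls the length swept out in the fiber $F$ by the increment of a time function pulled back from $Q$. The starting observation is the \emph{causal projection principle}: since $\tilde g = g + r^2 h$ with $h$ positive definite, a vector $(X,V)$ (with $X$ tangent to $Q$, $V$ to $F$) has $\tilde g((X,V),(X,V)) = \innerprod[g]{X}{X} + r^2\innerprod[h]{V}{V}$, so every causal vector on $M$ satisfies $\innerprod[g]{X}{X}\le -r^2\innerprod[h]{V}{V}\le 0$ with $X\neq 0$. Hence the projection $\pi\colon M\to Q$ sends future-directed causal curves to future-directed causal curves, and $\tilde\tau=\tau\circ\pi$ is strictly increasing along future causal curves of $M$ for any time function $\tau$ on $Q$.

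I treat statement (2) first, since (1) will follow. Assuming $Q$ globally hyperbolic, fix a smooth temporal function $\tau$ on $Q$ with timelike gradient and let $T$ be a future-directed timelike field so that $\frac{d}{dt}\tau(\alpha)=|\innerprod[g]{T}{\dot\alpha}|$ along a future causal $\gamma=(\alpha,\beta)$. The reverse Cauchy--Schwarz inequality for causal vectors, followed by the projection bound, gives $\frac{d}{dt}\tau(\alpha) \ge \sqrt{|\innerprod[g]{T}{T}|}\sqrt{-\innerprod[g]{\dot\alpha}{\dot\alpha}} \ge \sqrt{|\innerprod[g]{T}{T}|}\,r\,\sqrt{\innerprod[h]{\dot\beta}{\dot\beta}}$. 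On any region where $\alpha$ stays in a compact set, $r$ and $|\innerprod[g]{T}{T}|$ are bounded below, so a finite increment of $\tau\circ\alpha$ controls the $h$-length of $\beta$. Therefore, if $\gamma$ were future-inextendible while its projection $\alpha$ converged to some $q_\ast\in Q$, then $\tau\circ\alpha$ would have finite variation near the endpoint, $\beta$ would have finite $h$-length, and completeness of $(F,h)$ would force $\beta$ to converge, so $\gamma$ would converge --- contradicting inextendibility. Thus the projection of any inextendible causal curve of $M$ is inextendible in $Q$; since $\Sigma$ is a Cauchy hypersurface of $Q$ it is met exactly once by $\alpha$, and hence $\gamma$ meets $\Sigma\times F$ exactly once. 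This is statement (2), and the ``if'' direction of (1) is then immediate, since a spacetime possessing a Cauchy hypersurface is globally hyperbolic.

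For the ``only if'' direction of (1), I first recover global hyperbolicity of $Q$. The assignment $q\mapsto(q,p_0)$ embeds $(Q,g)$ as a totally geodesic isometric copy in $M$ (by the connection formulas of Proposition~\cite[\P7.35]{ONeil1983}), so a closed causal curve in $Q$ would lift to one in $M$; hence $Q$ is causal. Given $q_1\le q_2$ in $Q$, the lifted diamond lies in the compact $M$-diamond $J^+(q_1,p_0)\cap J^-(q_2,p_0)$, whose $\pi$-image is compact and, by the projection principle, contains $J^+_Q(q_1)\cap J^-_Q(q_2)$ as a closed subset; so $Q$ has compact causal diamonds and is globally hyperbolic.

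The main obstacle is completeness of $(F,h)$, where the naive imprisonment argument fails: an incomplete fiber geodesic escapes every compact subset of $F$, so the associated causal curve is not trapped. Instead I plan to contradict compactness of a causal diamond directly. If $(F,h)$ is incomplete, Hopf--Rinow yields an inextendible unit-speed geodesic of finite length; restricting to a short terminal segment gives, for every small $\delta>0$, an inextendible $\sigma\colon[0,\delta)\to F$ of length $<\delta$ that is Cauchy but non-convergent. I lift it to a future timelike curve $\gamma(t)=(\alpha(t),\sigma(t))$ with $\alpha$ converging to some $q_\ast$; then $\gamma$ is future-inextendible because $\sigma$ does not converge. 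The short-segment budget is crucial: the fiber distance from $\sigma(t)$ back to $\sigma(0)$ is $<\delta$, so fixing $q_{\ast\ast}\in I^+_Q(q_\ast)$ and taking $\delta$ below $d_Q(q_\ast,q_{\ast\ast})/\sup_K r$ (with $K$ the relevant compact region) allows one to route a causal curve of $M$ from each $\gamma(t)$ up to $(q_{\ast\ast},\sigma(0))$. Then all of $\gamma$ lies in the compact diamond $J^+(\gamma(0))\cap J^-(q_{\ast\ast},\sigma(0))$, yet $\gamma(t)$ has no convergent subsequence --- the contradiction forcing $(F,h)$ complete. The two delicate points I expect to require genuine care are this ``routing'' step, namely verifying that the $Q$-proper-time budget really suffices to drag the fiber endpoint back along an honest causal curve of $M$, and confirming that inextendible geodesic segments of arbitrarily small length exist, which is precisely where metric incompleteness is used.
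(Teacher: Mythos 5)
The paper offers no proof of this statement at all: it is quoted directly from Beem--Ehrlich--Easley \cite{BeEhEa1996} (Theorems 3.68 and 3.69), so the only meaningful comparison is with the textbook argument, which your proposal essentially reconstructs from scratch --- the causal projection principle, the finite-$h$-length estimate combined with completeness of $(F,h)$ for the Cauchy hypersurface statement, and a compactness violation produced by lifting a short inextendible fiber geodesic for the necessity of fiber completeness. Your argument is correct, and the two ``delicate points'' you flag do check out: arbitrarily short inextendible unit-speed segments exist by restricting a finite-length inextendible geodesic (non-convergent, since a geodesic converging to a point extends through it by local existence and uniqueness), and the routing step works because $q_\ast$, $q_{\ast\ast}$, the connecting timelike curves, and the compact set $K$ can all be fixed \emph{before} $\delta$ is chosen, with $\delta$ bounded by the \emph{time separation} $\tau(q_\ast,q_{\ast\ast})/\sup_K r$ (your ``$d_Q$'' should be the Lorentzian time separation, not a distance). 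Two small repairs improve the write-up. First, in recovering global hyperbolicity of $Q$ you do not need the unjustified claim that the $Q$-diamond is a \emph{closed} subset of the projected compact set: the projection principle gives $\pi\bigl(J^+_M((q_1,p_0))\bigr)\subseteq J^+_Q(q_1)$ while the totally geodesic lift gives the reverse inclusion, so the $Q$-diamond \emph{equals} the continuous image of the compact $M$-diamond and is compact outright. Second, ``causal plus compact diamonds'' is equivalent to global hyperbolicity only via the Bernal--S\'anchez refinement; if you prefer the classical definition, strong causality of $Q$ follows just as directly, since a violation at $q$ lifts to a violation at $(q,p_0)$ relative to product neighborhoods $U\times V$. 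Finally, your closing contradiction is cleaner than an imprisonment argument and is airtight as stated: the terminal segment is $d_h$-Cauchy, so if any subsequence $\sigma(t_n)$ converged the whole curve would converge; hence the lifted sequence admits no convergent subsequence inside the allegedly compact diamond. What your route buys over the paper's is self-containedness at the cost of length; what the citation buys the paper is brevity for a result that is genuinely standard.
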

We shall in general assume that $(F,h)$ is complete.

At times we will also invoke energy conditions; we recall their definitions. Observe that the dominant energy condition implies the null energy condition by continuity. 
\begin{defn}
	\begin{itemize}
		\item $(M,\tilde{g})$ is said to satisfy the \emph{dominant energy condition} if for every pair of causal vectors $X_1, X_2$ satisfying $\innerprod[\tilde{g}]{X_1}{X_2} < 0$, the Einstein tensor $\Einst{\tilde{g}}(X_1,X_2) \geq 0$. 
		\item $(M,\tilde{g})$ is said to satisfy the \emph{null energy condition} if for every null vector $L$, the Einstein tensor satisfies $\Einst{\tilde{g}}(L,L) \geq 0$. 
	\end{itemize}
\end{defn}

A critical concept in the discussion of gravitational collapse is the null expansion. In the setting of warped product space-times, the computations simplify. 
\begin{defn}
	Given $q\in Q$, and let $L, N$ be the two independent future-directed null vectors at $T_q Q$. We say that $q$ is
	\begin{itemize}
		\item a \emph{regular point} if $L(r) N(r) < 0$ (equivalently $\innerprod[g]{\D*r}{\D*r} > 0$); 
		\item \emph{anti-trapped} if $L(r) > 0$ and $N(r) > 0$;
		\item \emph{trapped} if $L(r) < 0$ and $N(r) < 0$; 
		\item \emph{marginally $L$-trapped} if $L(r) = 0$ and $N(r) \leq 0$; \emph{marginally $N$-trapped} if $L(r) \leq 0$ and $N(r) = 0$. (Marginal anti-trapping are defined analogously.)
	\end{itemize}
\end{defn}
Notice that while the null \emph{directions} are determined uniquely by the metric $g$, there is no canonical way to choose the ``lengths'' of the null vectors. 
Correspondingly, the numerical values of $L(r)$ and $N(r)$ depend on the chosen frame; but their \emph{signs} are invariant under rescaling by $\Real_+$. Thus the above definition is independent of the specific vectors $L$ and $N$ used in the computations. 

Let $L$ be a null geodesic vector field, we compute
\begin{equation}\label{eq:raychaudhuri}
	\covD_L (L(r)) = \covD^2_{L,L} r - \covD_{\covD_L L} r = - \frac{r}{n} G_Q(L,L) \leq 0,
\end{equation}
where we used \eqref{eq:EinsteinQ}, provided that the null energy condition holds. This is the
\emph{Raychaudhuri equation} in the context of warped product space-times, and implies that the null-expansion is monotonically decreasing along null geodesics to the future. 

Keeping $L$ as a null geodesic vector field, define $N$ to be the unique null vector field satisfying $\innerprod[g]{L}{N} = -1$. 
The normalization forces $\covD_L N = 0$. Under this setting, we have
\[
	\covD_L (N(r)) = \covD^2_{L,N} r = \frac{r}{n} G_Q(L,N) - \frac{1}{2n r} \Scal{h} + \frac{n-1}{2r} \innerprod[g]{\D*{r}}{\D*{r}}.
\] 
Observing that $\innerprod[g]{\D*{r}}{\D*{r}} = - 2 L(r) N(r)$ by our choice of normalization, we can re-write the equality as
\begin{equation}\label{eq:propexp}
	\covD_L (N(r^n)) =  r^n G_Q(L,N) - \frac{r^{n-2}}{2} \Scal{h}.
\end{equation} 
An immediate consequence is a special case of the black-hole topology theorem \cite{GalSch2006, Gallow2008}.
\begin{thm}[Black-hole topology] \label{thm:topology}
	Let $(M,\tilde{g})$ be a warped product space-time satisfying the dominant energy condition. 
	Let $\gamma$ denote a null geodesic segment on $(Q,g)$, and $L$ a smooth, future-directed tangent vector field to $\gamma$. 
	Let $N$ denote a smooth, complementary future-directed tangent vector field along (and transverse to) $\gamma$.
	Suppose $p, q\in \gamma$ are such that
	\begin{itemize}
		\item $q$ is to the future of $p$;
		\item $N(r) \leq 0$ at $q$; $N(r) > 0$ at $p$.
	\end{itemize}
	Then $\Scal{h} > 0$. 
\end{thm}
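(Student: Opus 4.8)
The plan is to argue by contradiction: assuming $\Scal{h} \leq 0$, I will show that the propagation identity \eqref{eq:propexp}, together with the dominant energy condition, forces $N(r^n)$ to be non-decreasing along $\gamma$ towards the future, which is incompatible with the prescribed signs of $N(r)$ at $p$ and $q$.

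First I would normalize the frame. Because the signs of $L(r)$ and $N(r)$ are invariant under rescaling by positive functions, I may take $L$ to be an affinely parametrized null geodesic vector field along $\gamma$ and let $N$ be the unique null field with $\innerprod[g]{L}{N} = -1$, so that $\covD_L N = 0$; this is precisely the normalization under which \eqref{eq:propexp} holds, and it preserves the hypotheses $N(r) > 0$ at $p$ and $N(r) \leq 0$ at $q$. I would then fix an arbitrary fiber point $f \in F$ and work along the curve $\gamma \times \{f\}$. Since $r$ is a function on $Q$ while $\Scal{h}$ is pulled back from $F$, along this curve $\Scal{h}$ equals the constant $\Scal{h}(f)$, and since $r > 0$ the quantity $N(r^n) = n r^{n-1} N(r)$ carries the same sign as $N(r)$ at each point.

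Next I would apply the dominant energy condition. The fields $L$ and $N$ are future-directed and null, hence causal, and satisfy $\innerprod[\tilde{g}]{L}{N} = -1 < 0$, so $\Einst{\tilde{g}}(L,N) \geq 0$; as $L,N$ are tangent to $Q$ and the Einstein tensor is block diagonal, this is exactly $G_Q(L,N) \geq 0$. Substituting into \eqref{eq:propexp} and using $r > 0$ together with the standing assumption $\Scal{h}(f) \leq 0$ (so that $-\frac{r^{n-2}}{2}\Scal{h} \geq 0$), I find $\covD_L(N(r^n)) \geq 0$ along $\gamma$. Integrating from $p$ to the future point $q$ yields $N(r^n)(q) \geq N(r^n)(p)$.

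The contradiction is then immediate: the sign data give $N(r^n)(p) > 0$ but $N(r^n)(q) \leq 0$, so $N(r^n)(q) < N(r^n)(p)$, contradicting the monotonicity just established. Hence $\Scal{h}(f) > 0$; since $f$ was arbitrary and the hypotheses refer only to $Q$, we conclude $\Scal{h} > 0$ on all of $F$. I expect no genuine obstacle here---the propagation equation \eqref{eq:propexp} does the real work---but the one point requiring care is step one: verifying that passing to the geodesic frame demanded by \eqref{eq:propexp} leaves the rescaling-invariant sign hypotheses intact, and that the possible dependence of $\Scal{h}$ on the fiber is harmless because the entire argument is carried out pointwise in $f$.
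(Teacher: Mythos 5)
Your proposal is correct and follows essentially the same route as the paper's proof: both normalize $L$ to be geodesic with $\innerprod[g]{L}{N}=-1$, apply the propagation identity \eqref{eq:propexp} together with the dominant energy condition, and derive the sign contradiction from the behavior of $N(r^n)$ between $p$ and $q$. The only cosmetic difference is that the paper invokes the mean value theorem to locate a point where $\covD_L N(r^n)<0$ and reads off $\Scal{h}>0$ there, whereas you run the contrapositive and integrate the monotonicity (and your remark on working pointwise in the fiber variable $f$ is a harmless bit of extra care the paper leaves implicit).
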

\begin{proof}
	Since the sign conditions are invariant under rescaling of of the vector fields, we can assume without loss of generality that $L$ is geodesic and $\innerprod[g]{L}{N} = -1$ along $\gamma$. Furthermore, as $r > 0$ by assumption, we thus have that $N(r^n)$ is non-positive at $q$ and positive at $p$.
	
	By the mean value theorem, there exists some point $p'$ on $\gamma$ between $p$ and $q$ such that $\covD_L N(r^n) < 0$. Applying \eqref{eq:propexp} at this point we have that
	\[ 0 >   G_Q(L,N) - \frac{1}{2r^2} \Scal{h}.\]
	By the dominant energy condition the first term is non-negative, and hence $\Scal{h} > 0$.   
\end{proof}
We note that $\gamma\times F$ is a null hypersurface in $(M,\tilde{g})$. A version in terms of space-like hypersurfaces (as in the case of \cite{GalSch2006, Gallow2008}) follow as an immediate consequence. 
\begin{cor}\label{cor:topology}
	Let $(M,\tilde{g})$ be a warped product space-time satisfying the dominant energy condition. Let $\sigma:[0,1]\to Q$ be a space-like curve. Let $L, N$ be two linearly-independent, future-directed null vector fields along $\sigma$, normalized such that $\innerprod[g]{\dot{\sigma}}{N} > 0$ and $\innerprod[g]{\dot{\sigma}}{L} < 0$. Suppose that $N(r) = 0$ at $\sigma(0)$ and $N(r) > 0$ along $\sigma((0,1])$. Then $\Scal{h} > 0$. 
\end{cor}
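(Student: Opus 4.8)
The plan is to reduce the statement to the null-geodesic version, Theorem \ref{thm:topology}, by producing a single $L$-null geodesic carrying a past point at which $N(r) > 0$ and a future point at which $N(r) \le 0$. The convenient setting is the global double-null coordinate system $(u,v)$ underlying \eqref{eq:doublenullmetric}, with labels chosen so that $L$ is proportional to $\partial_v$ and $N$ to $\partial_u$ (both future-directed, so that increasing either coordinate moves toward the future). From the line element $-\Omega(\D*{u}\otimes\D*{v}+\D*{v}\otimes\D*{u})$ one reads off that $\innerprod[g]{\dot\sigma}{N}$ is a negative multiple of $\dot\sigma^v$ and $\innerprod[g]{\dot\sigma}{L}$ a negative multiple of $\dot\sigma^u$; the two sign hypotheses therefore force $\dot\sigma^u > 0$ and $\dot\sigma^v < 0$ all along $\sigma$, so that $u$ is strictly increasing and $v$ strictly decreasing in the parameter $t$. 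Since $N(r)$ is a positive multiple of $\partial_u r$, the hypotheses become $\partial_u r = 0$ at $\sigma(0)$ and $\partial_u r > 0$ on $\sigma((0,1])$.

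Fix any $t_0\in(0,1]$ and set $p \eqdef (u(0),v(t_0))$; I would realize $p$ as the past corner of a null rectangle with sides along the two foliations. By construction $p$ lies on the $L$-geodesic $\{u=u(0)\}$ through $\sigma(0)$, strictly to its past because $v(t_0) < v(0)$; and $p$ lies on the $N$-geodesic $\{v=v(t_0)\}$ through $\sigma(t_0)$, strictly to its past because $u(0) < u(t_0)$. As $Q$ is homeomorphic to $\Real^2$, the three points and the two connecting null segments are all well defined.

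Next I would propagate positivity of $N(r)$ from $\sigma(t_0)$ back to $p$ using the Raychaudhuri equation \eqref{eq:raychaudhuri} applied to the null geodesic field $N$: since the dominant energy condition implies the null energy condition, $\covD_N\bigl(N(r)\bigr) \le 0$, so $N(r)$ is non-increasing toward the future along $N$-geodesics. Because $\sigma(t_0)$ is to the future of $p$ along $\{v=v(t_0)\}$, this yields $N(r)(p) \ge N(r)(\sigma(t_0)) > 0$. We then have, on the single $L$-geodesic $\gamma=\{u=u(0)\}$, a past point $p$ with $N(r) > 0$ and a future point $q=\sigma(0)$ with $N(r)=0\le 0$; applying Theorem \ref{thm:topology} to $\gamma$ with the pair $(p,q)$ delivers $\Scal{h} > 0$.

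The computation is light, and the main obstacle is purely the causal bookkeeping: one must check that the sign normalizations of $L$ and $N$ really do translate into strict monotonicity of the null coordinates along $\sigma$, so that $p$ genuinely lies to the past of $\sigma(0)$ along one null generator and to the past of $\sigma(t_0)$ along the other, with all segments remaining in $Q$. Once this rectangle is correctly oriented, the two monotonicity inputs combine cleanly: Raychaudhuri \eqref{eq:raychaudhuri} along $N$ carries $N(r)>0$ back to $p$, while the propagation identity \eqref{eq:propexp} that drives Theorem \ref{thm:topology} acts along $L$ to force $\Scal{h}>0$.
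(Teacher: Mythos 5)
Your strategy coincides with the paper's own proof: build a null rectangle whose past corner $p$ lies on the $L$-geodesic $\gamma$ through $q \eqdef \sigma(0)$, propagate the sign $N(r)>0$ from a point of $\sigma((0,1])$ back to $p$ along the transverse $N$-geodesic using the Raychaudhuri inequality \eqref{eq:raychaudhuri} (with the dominant energy condition implying the null energy condition), and then apply Theorem \ref{thm:topology} along $\gamma$ to the pair $(p,q)$. Your sign bookkeeping in double-null coordinates ($\dot\sigma^u>0$, $\dot\sigma^v<0$) is correct, and so is the implicit point that only the \emph{sign} of $N(r)$ matters, so rescaling $N$ to a geodesic field along $\{v=v(t_0)\}$ is harmless.

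There is, however, one step that fails as written: you fix an \emph{arbitrary} $t_0\in(0,1]$ and assert that, because $Q$ is homeomorphic to $\Real^2$, the corner $p=(u(0),v(t_0))$ and the two null segments joining it to $\sigma(0)$ and $\sigma(t_0)$ are automatically well defined in $Q$. This does not follow: the map $(u,v)$ is a local diffeomorphism, but its image need not be a coordinate rectangle, nor causally convex, so the corner need not correspond to any point of $Q$. For instance, the Minkowski plane with the closed quadrant $\{u\leq 0,\ v\leq 0\}$ deleted is still homeomorphic to $\Real^2$ and carries global null coordinates; for the spacelike curve $t\mapsto(-1+2t,\,1-2t+\epsilon)$, $t\in[0,1]$, with $0<\epsilon<1$, the curve lies in $Q$ but the corner $(-1,-1+\epsilon)$ at $t_0=1$ does not. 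The repair is exactly the device the paper uses: take $p$ on $\gamma$ \emph{sufficiently close} to $q$ (equivalently, take $t_0$ sufficiently small), so that the entire rectangle is contained in a single double-null coordinate neighborhood of $\sigma(0)$, where the chart is a diffeomorphism onto an open set containing a small square about $(u(0),v(0))$. Since the hypothesis gives $N(r)>0$ on all of $\sigma((0,1])$, nothing is lost by shrinking $t_0$, and with this one-line modification the rest of your argument goes through verbatim.
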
 
\begin{proof}
	Let $\gamma$ be the null geodesic generated by $L$ through $q\eqdef \sigma(0)$. For $p$ to the past of $q$ and sufficiently close to $q$, the transverse null geodesic to $\gamma$ through $p$ can be extended to intersect $\sigma$ at some point $p' \neq q$. As $N(r)|_{p'} > 0$, by \eqref{eq:raychaudhuri} $N(r)|_p > 0$ also, and thus we can apply Theorem \ref{thm:topology}. 
\end{proof}

As a result, we can also rule out \emph{spatially periodic} solutions which contain regular regions, when $(F,h)$ has negative curvature. (When $\Scal{h} > 0$, a classical example of such a space-time is the maximally extended Schwarzschild-de-Sitter solution.) 
In fact, such manifolds are necessarily trivial or cosmological.
\begin{thm}\label{thm:negcurperiod}
	Let $(M,\tilde{g})$ be a globally hyperbolic warped product space-time satisfying the dominant energy condition, with $\Scal{h} \leq 0$. Suppose there exists a Cauchy hypersurface $\sigma$ of $(Q,g)$ and a fixed-point-free isometry $\phi:Q\to Q$ that preserves time-orientation, maps $\sigma$ into itself, and commutes with the warping function $r$. Then either
	\begin{itemize}
		\item $\D*{r}$ is \emph{time-like} everywhere;
		\item $(M,\tilde{g})$ is a Cartesian product, with $(Q,g)$ flat and $(F,h)$ Ricci-flat.
	\end{itemize}
	In particular, $(Q,g)$ can contain no regular points. 
\end{thm}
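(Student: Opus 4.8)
The plan is to split on whether the warping function $r$ is constant. If $r$ is constant then $\D*{r}\equiv 0$, and the curvature formulas \eqref{eq:EinsteinQ}--\eqref{eq:EinsteinF} collapse to $G_Q = \tfrac{1}{2r^2}\Scal{h}\,g$ (after raising an index) and $G_F = \Einst{h} - \tfrac{r^2}{2}\Scal{g}\,h$. Feeding these into the dominant energy condition is the quickest route to the Cartesian-product alternative: testing $\Einst{\tilde{g}}$ on a $Q$-tangent unit timelike $T$ gives, using $\Scal{h}\le 0$, both $\Einst{\tilde{g}}(T,T)\ge 0$ and $\Einst{\tilde{g}}(T,T)=\tfrac{1}{2r^2}\Scal{h}\le 0$, forcing $\Scal{h}=0$. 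Then testing on the pair of future null vectors $T\pm e$, with $e$ a unit $F$-direction, pins $G_F(e,e)=0$ for all such $e$, which via \eqref{eq:EinsteinF} yields $\Scal{g}=0$ (so $(Q,g)$ is flat) and $\Einst{h}=0$ (so $(F,h)$ is Ricci-flat, the case $n=2$ being automatic). This is the second bullet.

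It remains to treat non-constant $r$ and show $\D*{r}$ is timelike everywhere; since a regular point is exactly one where $\D*{r}$ is spacelike, this subsumes the final sentence. The two engines are the Raychaudhuri monotonicity \eqref{eq:raychaudhuri}, $\covD_L(L(r))\le 0$ along future $L$-geodesics, and the integrated topology estimate: combining \eqref{eq:propexp} with the dominant energy condition and $\Scal{h}\le 0$ gives $\covD_L(N(r^n))\ge 0$ along future $L$-geodesics, together with the mirror statements interchanging $L$ and $N$. The structural input is the isometry: $\phi$ generates a free, properly discontinuous action of the cyclic group $\langle\phi\rangle$ preserving the time-orientation and the null directions, so the quotient $\bar{Q}=Q/\langle\phi\rangle$ is a globally hyperbolic Lorentz surface whose Cauchy hypersurface $C=\sigma/\langle\phi\rangle$ is a circle; since $r\circ\phi=r$, the function $r$ descends to $\bar{Q}$ and $r|_\sigma$ attains a maximum and minimum on the compact $C$.

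The heart of the argument is to exclude regular points. Suppose one exists; following a regular null geodesic until it meets the Cauchy surface (an inextendible causal curve must cross $\sigma$), I may assume a regular $p_*\in\sigma$, and after fixing orientations $L(r)>0>N(r)$ there. The two engines then show that the entire future $N$-geodesic and the entire past $L$-geodesic from $p_*$ stay regular with the \emph{same} sign pattern, and that $r$ decreases strictly as one moves outward along them. Transporting by $\phi$ (a spatial translation commuting with $r$) produces, in double-null coordinates $(u,v)$, a single point $q$ lying simultaneously on the regular future $N$-geodesic from $p_*$ — forcing $r(q)<r(p_*)$ — and on a future $L$-geodesic from $\phi^{k}(p_*)$ along which $L(r)$ stays positive — forcing $r(q)>r(\phi^{k}(p_*))=r(p_*)$. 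This contradiction excludes regular points.

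The step I expect to be the genuine obstacle is this last intersection/compactness bookkeeping. The ``decreasing'' null directions run into $\{r\to 0\}$ and the corresponding geodesics may be incomplete, so one cannot naively assert that the future $N$-geodesic from $p_*$ reaches the coordinate value realized by $\phi^{k}(p_*)$; this must be secured using global hyperbolicity (inextendible causal curves meet $\sigma$, causal diamonds are compact) together with the $\phi$-periodicity, and probably a short case analysis at the extrema of $r$ on $C$. Finally, to upgrade ``no regular points'' ($\innerprod[g]{\D*{r}}{\D*{r}}\le 0$) to ``$\D*{r}$ timelike'', I would rule out points where $\D*{r}$ is null or zero (moments of time-symmetry or recollapse): near such a point $L(r)N(r)\ge 0$ with $N(r)$ of fixed sign and a zero, so a spacelike curve into a region where $N(r)>0$ triggers Corollary \ref{cor:topology} to give $\Scal{h}>0$, a contradiction; the borderline configurations that survive (where $N(r)\equiv 0$ on an open set, a null/pp-wave–type region) force $\Scal{h}=0$ and Ricci-flatness through the same equations, returning us to the constant case. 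Granting this, $\D*{r}$ is timelike throughout, and the ``in particular'' clause is immediate since regular points require spacelike $\D*{r}$.
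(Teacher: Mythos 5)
Your toolkit is the right one, and your constant-$r$ endgame is essentially the paper's (modulo a sign: with $\D*{r}\equiv 0$, \eqref{eq:EinsteinQ} gives $G_Q = -\frac{1}{2r^2}\Scal{h}\, g$, so $\Einst{\tilde{g}}(T,T) = \frac{1}{2r^2}\Scal{h}$ for a unit timelike $T$ tangent to $Q$, and the dominant energy condition forces $\Scal{h}\geq 0$). But the two steps you flag or gloss over are genuine gaps, and the first is not mere ``bookkeeping.'' The intersection point $q$ need not exist: global hyperbolicity of a Lorentzian surface does not make null diamonds close. For instance, $\Real^{1,1}$ with the closed wedge $J^+((1,0))$ deleted is still globally hyperbolic with Cauchy surface $\{t=0\}$, yet the future right-moving null geodesic from $(0,-10)$ and the future left-moving null geodesic from $(0,10)$ terminate inextendibly and incompletely on the deleted boundary (at $(5.5,\mp 4.5)$) and never meet; their would-be intersection $(10,0)$ is deleted. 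Nothing in the dominant energy condition or $\Scal{h}\leq 0$ rules such behavior out a priori, so your contradiction cannot get started, and securing the intersection from the actual hypotheses would be about as hard as the theorem itself. (Granting $q$, your claim that $L(r)$ stays positive on the $L$-segment is recoverable --- $L(r)>0$ at $q$ by the mirror of \eqref{eq:propexp} along the $N$-geodesic, then run \eqref{eq:raychaudhuri} backwards --- so the existence of $q$ is the real issue.) The paper avoids bulk intersections entirely: it first shows the \emph{sign} of $N(r)$ is constant along $\sigma$, because at a first sign change one has exactly the marginal configuration of Corollary \ref{cor:topology}, whose proof is local near $\sigma$, forcing $\Scal{h}>0$; freeness of $\phi$ (whose restriction to $\sigma\cong\Real$ is an increasing fixed-point-free map, so the orbit of either ray covers $\sigma$) then makes the sign constant on all of $\sigma$, and \eqref{eq:raychaudhuri} together with \eqref{eq:propexp} propagates constancy to all of $Q$. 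Once signs are constant, one regular point makes every point of $\sigma$ regular, and then $r$ is strictly monotone along $\sigma$ (in two dimensions a spacelike tangent cannot be $g$-orthogonal to a spacelike $\D*{r}$), contradicting $r\circ\phi = r$ with $\phi$ fixed-point-free. You already hold all these pieces; you simply deploy Corollary \ref{cor:topology} too late, and with sign constancy in hand your geodesic-rectangle machinery becomes unnecessary.

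Second, your disposal of the null case is wrong as stated. If $N(r)\equiv 0$ and $L(r)\neq 0$ on an open set (or on all of $Q$, once signs are constant), the curvature conditions do \emph{not} return you to the constant case: \eqref{eq:propexp} and \eqref{eq:raychaudhuri} only yield $\Scal{h}=0$ and degeneracies of $G_Q$, and there are honest vacuum configurations --- take $Q = \{u>0\}\subset\Real^{1,1}$ in null coordinates, $r = u$, $F = \mathbb{T}^n$ flat --- which are globally hyperbolic, Ricci-flat, satisfy every curvature hypothesis, and have $\D*{r}$ null with $r$ non-constant, hence satisfy \emph{neither} alternative of the theorem. What excludes this case is once more the periodicity hypothesis, which such examples fail: if $\D*{r}$ is null and non-vanishing then $\dot\sigma(r)\neq 0$ (no spacelike vector is orthogonal to a nonzero null covector), so $r$ is strictly monotone along $\sigma$, again contradicting the fixed-point-free $r$-preserving $\phi$. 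Your sketch never brings $\phi$ to bear on this case, and no amount of curvature analysis can substitute for it.
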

\begin{proof}
	Let $L$ and $N$ denote our pair of linearly-independent, future-directed null vector fields on $Q$. We first claim that $L(r)$ and $N(r)$ has constant sign on $Q$. Let $q\in \sigma$. It splits $\sigma$ into to components $\sigma_+$ and $\sigma_-$, the two rays emanating from $q$ with tangents whose scalar products against $N$ are respectively positive and negative. By the contrapositive of Corollary \ref{cor:topology}, if $N(r)|_q > 0$ then $N(r) |_{\sigma_-} > 0$; if $N(r) |_q < 0$ then $N(r) |_{\sigma_+} < 0$. As $\phi$ has no fixed points, the orbit of $\sigma_\pm$ under $\phi$ is the entirety of $\sigma$. This shows that the sign of $N(r)$ must be constant along $\sigma$. By global hyperbolicity and the monotonicty formulae \eqref{eq:raychaudhuri} (which states $\covD_N N(r) \leq 0$) and \eqref{eq:propexp} (which states $\covD_L N(r^n) \geq 0$, using the non-positive curvature condition), the sign of $N(r)$ must be constant then along $Q$. 

	Now suppose that $\D*{r}$ is not time-like, then $L(r) N(r) \leq 0$. There are two possibilities. First, suppose every point in $Q$ is regular. But at regular points $\D*{r}$ is space-like, and in particular $r$ must be strictly monotonic along $\sigma$. This contradicts the existence of $\phi:\sigma\to\sigma$ with no fixed points that preserves $r$ in the hypotheses. Hence $(Q,g)$ can contain no regular point. 

	Similarly, suppose $N(r) = 0$ and $L(r) \neq 0$. Then $(\D*{r})^\sharp = \lambda N$ with $\lambda \neq 0$. This implies also that $r$ must be strictly monotonic along $\sigma$, which contradicts the existence of $\phi$. Hence we conclude that the only alternative to $\D*{r}$ being time-like is if $L(r) = N(r) = 0$ everywhere, and hence $r$ is constant. Returning to \eqref{eq:EinsteinQ}, we see that in this case, for the dominant energy condition to hold, we must have $\Scal{h} \geq 0$, combined with our assumptions this implies $\Scal{h} = 0$ and $G_Q = 0$. Since $G_Q = 0$, dominant energy condition forces $G_F = 0$, which by \eqref{eq:EinsteinF} requires $\Einst{h} \propto \Scal{g} h$. In dimension $n > 2$, this implies $(F,h)$ is Einstein. And hence for all dimensions $n \geq 2$ we have, by our conclusion that $\Scal{h} = 0$ that $(F,h)$ is Ricci-flat; \eqref{eq:EinsteinF} then implies $\Scal{g} = 0$. 
\end{proof}

The previous theorem classifies the spatially closed warped product space-times with $\Scal{h} \leq 0$. For the open case one would generally want to impose asymptotic conditions. 
We consider in particular the case of asymptotically conical space-times. 
\begin{defn}\label{defn:AC}
	Let $(M,\tilde{g})$ be a globally hyperbolic warped product space-time. We say that it is \emph{asymptotically conical} if $(F,h)$ is a closed Riemannian manifold, and there exists a Cauchy hypersurface $\sigma$ of $(Q,g)$ such that
	\begin{enumerate}
		\item $\sigma$ can be written as the disjoint union of $\sigma_-$, $\sigma_0$, and $\sigma_+$ where $\sigma_0$ is compact and $\sigma_\pm$ are diffeomorphic to $\Real$. 
		\item On $\sigma_{\pm}$ the function $r$ functions as a coordinate taking values in $(r_{\pm},\infty)$. 
		\item The induced metric on $\sigma_{\pm}$ converges to $\D*{r} \otimes \D*{r}$ as $r\to \infty$.
		\item The geodesic curvature of $\sigma_{\pm}$ converges to $0$ as $r \to \infty$. 
	\end{enumerate}
\end{defn}
\begin{rmk}\label{rmk:twoended}
	We assumed our space-time is two-ended. With the exception of $(F,h)$ being $\Sphere^n$ with the standard metric, in which case we are in the usual spherically-symmetric setting, a warped product space-time cannot have ``only one end''. (In the case of the standard sphere there is the possibility of closing up the points at which $r \searrow 0$.)
\end{rmk}
An immediate consequence of our definition is that
\begin{prop}\label{prop:negcuropen}
	There are no asymptotically conical warped product space-times with $\Scal{h} \leq 0$ and satisfying the dominant energy condition. 
\end{prop}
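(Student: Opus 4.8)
The plan is to mirror the sign-propagation argument of Theorem \ref{thm:negcurperiod}, replacing the role played there by the periodicity isometry $\phi$ with the two-ended asymptotic structure. I will show that the asymptotically conical ends force the null derivative $N(r)$ to carry \emph{opposite} signs near the two ends, while the dominant energy condition together with $\Scal{h}\leq 0$ propagates any one sign monotonically across the whole Cauchy slice; the two facts are incompatible, so no such space-time can exist.

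First I would fix a globally consistent future-directed null frame $L,N$ on $Q$ (possible since $Q\cong\Real^2$ is time-orientable) together with a consistent spatial orientation of the Cauchy hypersurface $\sigma$, writing, up to positive rescaling, $L = T+e$ and $N = T-e$ where $T$ is the future unit normal and $e$ the unit tangent of $\sigma$. The geometric input from Definition \ref{defn:AC} is that on each end condition (3) gives $e(r)\to \pm 1$, while condition (4) controls the normal derivative so that $\innerprod[g]{\D*r}{\D*r}\to 1$ as $r\to\infty$; in particular $\D*r$ is space-like near both ends, so the ends consist asymptotically of regular points. Since $r\to\infty$ on both $\sigma_\pm$ while $\sigma_0$ is compact, $\D*r$ points outward in \emph{opposite} spatial directions on the two ends, i.e.\ $e(r)$ has opposite signs there. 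Decomposing $(\D*r)^\sharp$ against $L$ and $N$ and using that the tangential part dominates at a regular point, one finds $N(r) = T(r)-e(r)\approx -e(r)$, so that $N(r)>0$ near one end and $N(r)<0$ near the other.

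Next I would run the propagation. Exactly as in the proof of Theorem \ref{thm:negcurperiod}, the contrapositive of Corollary \ref{cor:topology} (applicable because $\Scal{h}\leq 0$ and the dominant energy condition holds) shows that $N(r)>0$ at a point $q\in\sigma$ forces $N(r)>0$ along the entire ray emanating from $q$ in the spatial direction with $\innerprod[g]{\dot\sigma}{N}<0$. With the orientation fixed above this direction is precisely the one running from the end where $N(r)>0$ toward the opposite end. Choosing $q$ far out on the end where $N(r)>0$ and propagating across the compact core $\sigma_0$, I conclude $N(r)>0$ all the way out to the other end, contradicting $N(r)<0$ there. This contradiction establishes the Proposition.

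The hard part, and the only place where the precise form of Definition \ref{defn:AC} enters, is the first step: deducing from the induced-metric hypothesis (3) and the geodesic-curvature hypothesis (4) that $\innerprod[g]{\D*r}{\D*r}\to 1$ (equivalently $T(r)\to 0$), so that the ends are genuinely regular and the sign of $N(r)$ is unambiguous. Once the ends are known to be regular with outward-pointing space-like $\D*r$ lying on opposite sides of $\sigma_0$, the remainder is a direct sign chase identical in spirit to Theorem \ref{thm:negcurperiod}, invoking only \eqref{eq:raychaudhuri}, \eqref{eq:propexp}, and Corollary \ref{cor:topology}.
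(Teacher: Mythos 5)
Your proposal is correct and takes essentially the same route as the paper's own proof: Definition \ref{defn:AC} supplies points on $\sigma_+$ and $\sigma_-$ at which $N(r)$ carries opposite strict signs, and the sign-propagation argument of Theorem \ref{thm:negcurperiod} (the contrapositive of Corollary \ref{cor:topology} together with \eqref{eq:raychaudhuri} and \eqref{eq:propexp}) spreads the positive sign across $\sigma_0$ to the far end, yielding the contradiction. The step you flag as ``the hard part'' is dispatched by the paper as immediate (``By definition we must have \dots a point $p_+$ such that $N(r) > 0$ and $L(r) < 0$''), so your more cautious unpacking of conditions (3) and (4) is, if anything, more detailed than what the paper records.
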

\begin{proof}
	Let $L, N$ be the two null vector fields along the Cauchy hypersurface $\sigma$. By definition we must have (up to exchanging the labels $L$ and $N$) on $\sigma_+$ a point $p_+$ such that $N(r) > 0$ and $L(r) < 0$, and similarly a point $p_-$ on $\sigma_-$ such that $N(r) < 0$ and $L(r) > 0$. But as was shown in the proof of Theorem \ref{thm:negcurperiod}, the non-positivity of $\Scal{h}$ would force, based on the existence of $p_+$, that $N(r) > 0$ on both $\sigma_0$ and $\sigma_-$, giving a contradiction. 
\end{proof}

For warped product manifolds with $\Scal{h} > 0$, the analogous statement is
\begin{prop}
	Let $(M,\tilde{g})$ be a globally hyperbolic warped product space-time satisfying the dominant energy condition. 
	Let $\sigma$ denote a Cauchy hypersurface of $(Q,g)$. 
	If either 
	\begin{itemize}
		\item $(M,\tilde{g})$ is spatially periodic as in Theorem \ref{thm:negcurperiod}; or
		\item $(M,\tilde{g})$ is asymptotically conical,
	\end{itemize} 
	then $\sigma$ cannot consist entirely of regular points. 
\end{prop}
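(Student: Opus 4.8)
The plan is to reduce both cases to a single elementary observation: \emph{if every point of $\sigma$ is regular, then $r$ is strictly monotonic, hence injective, along $\sigma$.} Recall that at a regular point $(\D*{r})^\sharp$ is space-like, and that in a two-dimensional Lorentzian manifold the orthogonal complement of a space-like vector is time-like. Since $\sigma$ is a Cauchy hypersurface of the Lorentzian surface $Q$, it is a space-like curve, so its tangent $\dot\sigma$ is space-like; therefore $(\D*{r})^\sharp$ can never be orthogonal to $\dot\sigma$ at a regular point. Consequently $\tfrac{d}{dt}\, r(\sigma(t)) = \innerprod[g]{(\D*{r})^\sharp}{\dot\sigma}$ is nonvanishing wherever $\sigma$ is regular. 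As $\sigma$ is connected (diffeomorphic to $\Real$, being a Cauchy hypersurface of $Q\cong\Real^2$) and this derivative is continuous, it keeps a constant sign, so $r|_\sigma$ is strictly monotonic. This is exactly the monotonicity already invoked in the proof of Theorem \ref{thm:negcurperiod}.

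With this in hand, the two cases close by different contradictions. In the spatially periodic case, the isometry $\phi$ maps $\sigma$ into itself with no fixed points and satisfies $r\circ\phi = r$; but strict monotonicity makes $r|_\sigma$ injective, so for each $x\in\sigma$ the identity $r(\phi(x)) = r(x)$ forces $\phi(x) = x$, contradicting that $\phi$ is fixed-point-free. In the asymptotically conical case, Definition \ref{defn:AC} requires that $r$ take all values in $(r_\pm,\infty)$ on each of the two ends $\sigma_\pm$; in particular $r$ is unbounded above as one exits $\sigma$ through \emph{either} end (consistent with Remark \ref{rmk:twoended}). A strictly monotonic function on $\sigma\cong\Real$ tends to its supremum at one end and to its infimum at the other, so it is unbounded above through at most one of the two ends, again a contradiction.

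The main (and essentially only) point requiring care is the first one: verifying that regularity of every point of $\sigma$ genuinely forces $r|_\sigma$ to be strictly monotonic. This rests on the local Lorentzian fact that, in two dimensions, the space-like gradient $(\D*{r})^\sharp$ and the space-like tangent $\dot\sigma$ are never orthogonal, so their pairing is nowhere zero. Equivalently, one may argue in a null frame: regularity reads $L(r)\,N(r) < 0$, so $L(r)$ and $N(r)$ are nowhere zero on $\sigma$, and one checks that the outward growth of $r$ at the two ends $\sigma_\pm$ forces $N(r)$ to take opposite signs there, whence an intermediate value argument produces a zero of $N(r)$ — a non-regular point — exactly as in the proof of Proposition \ref{prop:negcuropen}. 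I emphasize that, in contrast with the $\Scal{h}\le 0$ results, neither the dominant energy condition nor the sign of $\Scal{h}$ actually enters: once the monotonicity is established, the obstruction is purely topological.
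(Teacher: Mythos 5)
Your proof is correct and is essentially the argument the paper intends: the authors omit the proof as ``very similar to and simpler than'' those of Theorem \ref{thm:negcurperiod} and Proposition \ref{prop:negcuropen}, and you have reassembled exactly the relevant pieces --- the observation (already used in the proof of Theorem \ref{thm:negcurperiod}) that regularity of every point of $\sigma$ forces $r|_\sigma$ to be strictly monotonic, contradicting the fixed-point-free isometry in the periodic case, and the two-ended sign/intermediate-value argument of Proposition \ref{prop:negcuropen} in the asymptotically conical case, with Remark \ref{rmk:twoended} ruling out the obvious one-ended counterexample. Your closing remark that neither the dominant energy condition nor the sign of $\Scal{h}$ actually enters this argument is also accurate and consistent with the paper's framing.
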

We omit the proof as it is very similar to and simpler than the proofs of Theorem \ref{thm:negcurperiod} and Proposition \ref{prop:negcuropen}; but remark that the obvious counterexample is ruled out by Remark \ref{rmk:twoended}. 

To conclude this section, we remark that the asymptotic conical assumption implies that, by possibly enlarging $\sigma_0$, we can assume $\sigma_+$ consists entirely of regular points. Let $\gamma$ be the future-directed null geodesic ray emanating from $\partial\sigma_+$ with $\covD_{\dot{\gamma}} r |_{\gamma(0)} < 0$, then within the future development of $\gamma \cup \sigma_+$ we have that Dafermos' condition $\tilde{\mathbf{A}}'$ \cite[Sec.\ 6]{Daferm2005} holds, and hence the result concerning completeness of future null infinity applies equally. 

We add further that in a system where the initial data for all matter fields \emph{except} for electromagnetism have compact support, and the cosmological constant $\Lambda$ vanishes, Dafermos' condition $\mathbf{E}'$ on the non-emptiness of future null infinity follows from Theorem \ref{thm:birkhoff2}; see also the next section.

\section{Matter models}\label{sec:Matter}
In the following we will assume that our space-time $(M,\tilde{g})$ solves Einstein's equation 
\begin{equation}\label{eq:einstein}
	\Einst{\tilde{g}} = T - \Lambda \tilde{g},
\end{equation}
where $\Lambda$ is the cosmological constant, and $T$ is the stress-energy tensor for the matter content. This can be re-written in terms of the Ricci tensor as
\begin{equation}\label{eq:einstein2}
	\Ric{\tilde{g}} - \frac{2}{n} \Lambda \tilde{g} = T - \frac1n (\trace_{\tilde{g}} T) \tilde{g}.
\end{equation}
The warped product structure on $(M,\tilde{g})$ forces the Einstein tensor to be block-diagonal, and hence through \eqref{eq:einstein} we can also write $T = T_Q + T_F$, where $T_Q$ is tangent to the base $Q$ and $T_F$ is tangent to the fibers $F$. 
Einstein's equation then forces 
\begin{subequations}
	\begin{gather}
		\mathrm{div}_h T_F  = 0,  \label{eq:divTF} \\
		(\mathrm{div}_g T_Q)(X) + \frac{n}{r} T_Q( (\D*r)^\sharp,X) - (\trace_{r^2 h} T_F) \frac{X(r)}{r} = 0. \label{eq:divTQ}
	\end{gather}
\end{subequations}

\subsection{Electrovacuum space-times} 
Let us consider first the case that the space-time is either devoid of matter content, or that the only matter field is electromagnetic; lessons from the spherical symmetric setting show us that these two cases can often be bundled with similar features. 

Writing $H$ for the Faraday tensor in linear electrodynamics, the stress-energy tensor can be written in index notation as 
\begin{equation}
	T_{ab} = H_{ac} H_b{}^c - \frac14 \tilde{g}_{ab} H_{cd} H^{cd}.
\end{equation}
We make the following assumption:
\begin{equation}\label{eq:HisLift}
	H \text{ is the lift of a two form from } Q.
\end{equation}
In the case of spherical symmetry this assumption holds automatically (except when $n = 2$ where one can also have a factor proportional to $\mathrm{dvol}_h$). As $Q$ is two-dimensional, this means that $H = \alpha~\mathrm{dvol}_g$ for some scalar function $\alpha$. This in particular implies that $H_{ac} H_b^{c} = - \alpha^2 g_{ab}$ (see Proposition \ref{prop:hodge2}), and
therefore the stress-energy tensor take the form 
\begin{equation}
	T = - \frac12 \alpha^2 g + \frac12 \alpha^2 r^2 h.
\end{equation}
Thus we can apply Birkhoff's theorem (see Proposition \ref{prop:birkhoff1} and Theorem \ref{thm:birkhoff2}) with $\kappa = -1$ to obtain a complete description of such solutions. 
\begin{cor}\label{cor:BirkhoffEV}
	Let $(M,\tilde{g})$ be a warped product electro-vacuum space-time, which we assume satisfies \eqref{eq:HisLift}. Then there exists a constant $\alpha_0$ such that 
	\[ H = \frac{\alpha_0}{r^n} ~\mathrm{dvol}_g.\]
	and that 
	\[ \innerprod[g]{\D*r}{\D*r} = - \innerprod[g]{K}{K} = \frac{\Scal{h}}{n(n-1)} - \frac{2m}{r^{n-1}} - \frac{2\Lambda}{n(n+1)} r^2 + \frac{\alpha_0^2}{n(n-1) r^{2n-2}}. \] 
	Furthermore, exactly one of the following holds:
	\begin{enumerate}
		\item $H$, $\Lambda$, and $m$ vanish identically, $(Q,g)$ is locally isometric to $\Real^{1,1}$, $(F,h)$ is Ricci-flat, and $r$ is equal to one of the canonical optical functions on $\Real^{1,1}$. 
		\item The function $r$ is constant and the two-form $H$ is covariantly constant on $Q$. The manifold $(Q,g)$ is maximally symmetric and $(F,h)$ is Einstein. The constants $\alpha_0, r, m, \Lambda, \Scal{h}, \Scal{g}$ satisfies
			\begin{subequations}
			\begin{gather}
				\label{eq:constantr1} \left( \frac1n - \frac12\right) \Scal{h} = \frac{r^2}{2} \left( \Scal{g}  + \frac{\alpha_0^2}{r^{2n}} - 2\Lambda \right), \\
				\label{eq:constantr2} \frac{\Scal{h}}{2r^2} = \frac12 \frac{\alpha_0^2}{r^{2n}} + \Lambda,\\
				\innerprod[g]{\D*{r}}{\D*{r}} = 0.
			\end{gather}
			\end{subequations}
		\item The Kodama vector field $K$ is a non-trivial Killing field, the metric $g$ can be written in Eddington-Finkelstein form 
			\[ - \innerprod[g]{\D*r}{\D*r} \D{u^2} \pm (\D*{u} \otimes \D*r + \D*r \otimes \D*u).\]
			This class includes the special cases of the Reissner-Nordstr\"om family (and their de Sitter and anti de Sitter counterparts) with $(F,h)$ being the standard $\Sphere^n$. 
	\end{enumerate}
\end{cor}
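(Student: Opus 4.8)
The plan is to recognize the corollary as the $\kappa = -1$ specialization of Theorem \ref{thm:birkhoff2} and then transcribe the theorem's conclusions. Starting from Einstein's equation \eqref{eq:einstein} together with the stress-energy form $T = -\frac12\alpha^2 g + \frac12\alpha^2 r^2 h$, I would read off the two blocks $G_Q = (-\frac12\alpha^2 - \Lambda)g$ and $G_F = (\frac12\alpha^2 - \Lambda)r^2 h$. Comparing with the hypotheses $G_Q = (\kappa p - \Lambda)g$ and $G_F = (p-\Lambda)r^2 h$ forces $p = \frac12\alpha^2$ and $\kappa p = -p$, so that $\kappa = -1$ (note $p \not\equiv 0$ in the nontrivial cases). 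This legitimizes applying Theorem \ref{thm:birkhoff2}.

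The first two displayed assertions are then pure substitution. Equation \eqref{eq:pprop} at $\kappa = -1$ reads $p^{-1} = p_0^{-1} r^{2n}$, hence $\alpha^2 = 2p = 2p_0 r^{-2n}$; since $\alpha$ is continuous and $Q$ connected, $\alpha_0 := \alpha\,r^n$ is a constant, giving $H = \alpha_0 r^{-n}\,\mathrm{dvol}_g$. For the norm of $\D*{r}$, I would evaluate $\Pi$ from \eqref{eq:Pidef} at $\kappa = -1$, obtaining $\Pi = -\frac{1}{n-1}r^{1-n}$, and substitute into \eqref{eq:Knorm}; the final term collapses to $\frac{2p_0}{n(n-1)}r^{2-2n} = \frac{\alpha_0^2}{n(n-1)r^{2n-2}}$, yielding the stated formula.

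The trichotomy is the image of the theorem's trichotomy under this specialization. The non-constant null branch (the theorem's second alternative) maps to the corollary's first case: here the subtle point is that the exceptional subcase requiring $\kappa = \frac{n}{n-2}$ is incompatible with $\kappa = -1$ for every $n \geq 2$ (it would force $n = 1$), so one is necessarily in the Ricci-flat subcase. Demanding $\Einst{\tilde{g}} \equiv 0$ with $\kappa = -1$ forces $p = \Lambda$ and $-p = \Lambda$, hence $\Lambda = p = 0$ and therefore $\alpha = 0$, $H \equiv 0$; the theorem already supplies $m = 0$, and the conditions $\covD^2 r \equiv 0$, $\innerprod[g]{\D*{r}}{\D*{r}} = 0$ identify $\D*{r}$ as null and parallel, i.e.\ $r$ as a canonical optical function on $\Real^{1,1}$. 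The open-dense regular branch (the theorem's third alternative) maps directly to the corollary's third case, since the line element \eqref{eq:metricformnull} is exactly the stated Eddington-Finkelstein form and Proposition \ref{prop:birkhoff1} already provides the nontrivial Killing field.

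The only branch requiring computation beyond invoking the theorem is the constant-$r$ case, producing the corollary's second alternative. With $\D*{r}$ and $\covD^2 r$ vanishing, \eqref{eq:EinsteinQ} reduces $G_Q$ to $-\frac{\Scal{h}}{2r^2}g$; matching against $-\frac12\alpha^2 g - \Lambda g$ (with the now-constant $\alpha = \alpha_0 r^{-n}$) gives \eqref{eq:constantr2}. For \eqref{eq:constantr1} I would insert the Einstein relation $\Einst{h} = (\frac1n - \frac12)\Scal{h}\,h$ into \eqref{eq:EinsteinF}, which with $\D*{r} = 0$ reduces $G_F$ to $\Einst{h} - \frac{r^2}{2}\Scal{g}h$, and match against $\frac12\alpha^2 r^2 h - \Lambda r^2 h$. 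That $H$ is covariantly constant is immediate once $r$, and hence $\alpha$, is constant, since $\mathrm{dvol}_g$ is always parallel. I expect the main obstacle to be purely organizational: keeping the sign conventions and the three-way case correspondence aligned while extracting these two algebraic constraints in the constant-$r$ regime.
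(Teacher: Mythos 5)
Your proposal is correct and takes essentially the same approach as the paper: the paper itself obtains Corollary \ref{cor:BirkhoffEV} precisely by computing $T = -\frac12 \alpha^2 g + \frac12 \alpha^2 r^2 h$ and invoking Theorem \ref{thm:birkhoff2} with $\kappa = -1$, leaving the routine specializations implicit. Your filled-in details all check out --- $p = \frac12\alpha^2$, $\alpha_0^2 = 2p_0$ with $\alpha r^n$ constant by continuity and connectedness, $\Pi = -\frac{1}{n-1}r^{1-n}$ collapsing the last term of \eqref{eq:Knorm}, the exclusion of the exceptional subcase since $\kappa = \frac{n}{n-2} = -1$ would force $n=1$, and the constant-$r$ matching of \eqref{eq:EinsteinQ} and \eqref{eq:EinsteinF} against $T - \Lambda\tilde{g}$ to produce \eqref{eq:constantr1} and \eqref{eq:constantr2}.
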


\begin{rmk}\label{rmk:BirkhoffEV}
	Corollary \ref{cor:BirkhoffEV} is compatible with Theorem \ref{thm:negcurperiod}. Consider in particular the section option: For the dominant energy condition to hold necessarily $\frac12 \alpha_0^2 + \Lambda \geq 0$. But if $\Scal{h} \leq 0$, the only possibility is for $\Scal{h} = 0$ and $\Lambda = - \frac12 \alpha_0^2$ in \eqref{eq:constantr2}. Plugging this into \eqref{eq:constantr1} this implies $\Scal{g} \leq 0$. But the two dimensional anti-de-Sitter space is not globally hyperbolic, hence the only possibility remaining is that $\Scal{g} = 0$ and $\Lambda = \alpha_0 = 0$. Note that this also implies $m = 0$. 
\end{rmk}

\subsection{Scalar fields}
For scalar fields $\phi:M\to\Real$ satisfying the linear wave equation $\triangle_{\tilde{g}} \phi = 0$, the stress-energy tensor is given by 
\begin{equation}
	T = \D*\phi \otimes \D*\phi - \frac12 \tilde{g} \innerprod[\tilde{g}]{\D*\phi}{\D*\phi}.
\end{equation}
This gives 
\begin{equation} 
	T - \frac{1}{n}(\trace_{\tilde{g}} T) \tilde{g} = \D*\phi \otimes \D*\phi = \Ric{\tilde{g}} - \frac2n \Lambda \tilde{g} 
\end{equation}
by \eqref{eq:einstein2}. That the Ricci tensor is block diagonal implies that $\phi$ is either constant along $Q$, or constant along $F$. 

In the case $\phi$ is constant along $Q$, by \eqref{eq:RicciF}, we must have 
\begin{equation}\label{eq:scalfield1}
	r \triangle_g r + (n-1) \innerprod[g]{\D*r}{\D*r} = c 
\end{equation}
for some constant $c$, as the other terms are independent of $Q$. This implies then $(F,h,\phi)$ is a solution of the \emph{Riemannian} Einstein-scalar-field problem $\Ric{h} - c h = \D*\phi \otimes \D*\phi$. However, this requires $\D*\phi \cdot \triangle_h \phi = 0$, and in the case of $(F,h)$ being a closed Riemannian manifold, or in the case of $(F,h)$ being open and we impose suitable decay conditions on $\phi$, this implies the field $\phi$ is constant. From this we conclude that $T \equiv 0$ and we are reduced to the vacuum case. 

In the case $\phi$ is constant along $F$, again we must have \eqref{eq:scalfield1} holding, which implies $(F,h)$ is Einstein. And the equations of motion reduce similarly to the case of spherical symmetry. The fundamental unknowns are the metric $g$ (which after fixing a double-null coordinate system has one degree of freedom; see \eqref{eq:doublenullmetric}), the warping function $r$, and the scalar field $\phi$, which now satisfies \eqref{eq:divTQ}. It is convenient to augment the system with the Hawking energy. The equations are given by the following quasilinear system: 
\begin{subequations}
	\begin{gather}
		\mathrm{div}_g( r^n \D\phi) = 0 ,\\
		\triangle_g r = \frac{2(n-1)}{r^n} \hawking - \frac{2}{n} \Lambda r ,\\
		\Scal{g} = \frac{2n(n-1)}{r^{n+1}} \hawking + \frac{4-2n}{n} \Lambda + \innerprod[g]{\D*\phi}{\D*\phi},\\
		- \frac{n}{r} (\covD^2 r - \frac12 \triangle_g r g) = \D*{\phi} \times \D*{\phi} - \frac12 \innerprod[g]{\D*\phi}{\D*\phi},\\
		\D*\hawking = \frac{r^n}{n} \left[ \innerprod[g]{\D*{\phi}}{\D*{r}} \D\phi + \left( \Lambda - \frac12 \innerprod[g]{\D*\phi}{\D*\phi}\right) \D r \right].
	\end{gather}
\end{subequations}
In the null coordinates $u,v$, the equations can be expressed using \eqref{eq:doublenullmetric}, \eqref{eq:metricwave}, and \eqref{eq:LBop}; furthermore, following an observation of Christodoulou \cite{Christ1993}, the metric factor $\Omega$ can be eliminated (algebraically) by the formula
\[ \Omega = - \frac{2 r_{,u} r_{,v}}{ \tilde{S} - \frac{2 \hawking}{r^{n-1}}}, \qquad\text{ where } \tilde{S} \eqdef \frac{\Scal{h}}{n(n-1)}. \]
That the metric decouples is effectively due to the scalar field being a conformal field in two space-time dimensions.
We obtain the system (making use of the subscript notation for partial differentiation)
\begin{subequations}
	\begin{gather}
		2r\cdot \phi_{,uv} + n \left( r_{,u} \phi_{,v} + r_{,v} \phi_{,u}\right) = 0, \\
		r \cdot r_{,uv} = \frac{r_{,u} r_{,v}}{\tilde{S} - \frac{2 \hawking}{r^{n-1}}} \left[ (n-1) \frac{2\hawking}{r^{n-1}} - \frac{2}{n} \Lambda r^2 \right], \\
		2 r_{,u} \hawking_{,u} = \frac{r^n}{n} \left[ \left( \tilde{S} - \frac{2\hawking}{r^{n-1}} \right) \left(\phi_{,u}\right)^2 + 2 (r_{,u})^2 \Lambda\right], \\
		2 r_{,v} \hawking_{,v} = \frac{r^n}{n} \left[ \left( \tilde{S} - \frac{2\hawking}{r^{n-1}} \right) \left(\phi_{,v}\right)^2 + 2 (r_{,v})^2 \Lambda \right].
	\end{gather}
\end{subequations}
Much of the analysis of Einstein-scalar-field systems in spherical symmetry can be carried out analogously for this system. For example, the following is an immediate consequence of the propagation equations for the Hawking energy, 
\begin{lem}[Monotonicity of Hawking energy] \label{lem:HEmon}
	\begin{itemize}
		\item When $\Lambda \geq 0$, in the regular region (namely $\tilde{S} - \frac{2\hawking}{r^{n-1}} > 0$), we have that $r_{,u}$ and $\hawking_{,u}$ have the same sign, and $r_{,v}$ and $\hawking_{,v}$ have the same sign.
		\item When $\Lambda \leq 0$, in trapped regions $\hawking$ is non-decreasing in both $u$ and $v$, and in anti-trapped regions $\hawking$ is non-increasing. 
	\end{itemize}
\end{lem}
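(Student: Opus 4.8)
The plan is to read both conclusions directly off the last two equations of the displayed system, the propagation equations for $\hawking$ in the $u$- and $v$-directions, after first pinning down the geometric meaning of the recurring factor $\tilde{S} - \frac{2\hawking}{r^{n-1}}$. Rearranging the definition \eqref{eq:varpidef} of the Hawking energy yields the identity $\tilde{S} - \frac{2\hawking}{r^{n-1}} = \innerprod[g]{\D*{r}}{\D*{r}}$; hence this factor is positive exactly on the regular region (where $\D*{r}$ is space-like) and negative on the trapped and anti-trapped regions (where $\D*{r}$ is time-like). This is the key observation, as it converts the causal terminology of the statement into definite signs for the bracket.

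For the first bullet, suppose $\Lambda \geq 0$ and work in the regular region. In the equation $2 r_{,u}\hawking_{,u} = \frac{r^n}{n}\bigl[(\tilde{S} - \frac{2\hawking}{r^{n-1}})(\phi_{,u})^2 + 2(r_{,u})^2\Lambda\bigr]$ the prefactor $r^n/n$ is positive since $r > 0$, and both terms inside the bracket are non-negative: the first because the bracket factor is positive on the regular region and $(\phi_{,u})^2 \geq 0$, the second because $\Lambda \geq 0$. Thus $r_{,u}\hawking_{,u}\geq 0$, so $r_{,u}$ and $\hawking_{,u}$ share the same sign; the identical argument applied to the $v$-equation handles $r_{,v}$ and $\hawking_{,v}$.

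For the second bullet, suppose $\Lambda \leq 0$. On a trapped or anti-trapped region the bracket factor $\tilde{S} - \frac{2\hawking}{r^{n-1}} = \innerprod[g]{\D*{r}}{\D*{r}}$ is negative, so $(\tilde{S} - \frac{2\hawking}{r^{n-1}})(\phi_{,u})^2 \leq 0$, and the $\Lambda$-term is also $\leq 0$; hence $r_{,u}\hawking_{,u}\leq 0$ and likewise $r_{,v}\hawking_{,v}\leq 0$. It remains to divide by the known sign of $r_{,u}$ and $r_{,v}$, which is fixed by the causal classification: in a trapped region both are negative, so the division flips the inequalities to $\hawking_{,u}\geq 0$ and $\hawking_{,v}\geq 0$ (non-decreasing), while in an anti-trapped region both are positive, giving $\hawking_{,u}\leq 0$ and $\hawking_{,v}\leq 0$ (non-increasing).

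There is no serious obstacle here: the whole argument is a pointwise sign inspection of the two Hawking propagation equations. The only points requiring care are the algebraic identification of the bracket factor with $\innerprod[g]{\D*{r}}{\D*{r}}$, which is what translates \emph{regular}, \emph{trapped}, and \emph{anti-trapped} into definite signs, and the direction of the inequality flip when one divides through by $r_{,u}$ or $r_{,v}$.
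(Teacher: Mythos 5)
Your proof is correct and takes essentially the same route as the paper, which states the lemma as an immediate consequence of the two propagation equations for $\hawking$ and leaves precisely your sign inspection implicit. Your identification $\tilde{S} - \frac{2\hawking}{r^{n-1}} = \innerprod[g]{\D*{r}}{\D*{r}}$ (which is also visible in the paper's formula for $\Omega$) and the use of the strict signs of $r_{,u}$, $r_{,v}$ in trapped and anti-trapped regions correctly fill in the omitted details.
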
 
For asymptotically conical solutions, this monotonicity also immediately implies a corresponding \emph{Penrose inequality}; this statement holds in a much general setting, see the discussion in \cite[Sec.\ 5]{Daferm2005}. 

Another particular consequence of this monotonicity is that 
\begin{prop}
	Let $(M,\tilde{g},\phi)$ be a warped product space-time solving the Einstein-scalar-field system with $\Lambda \leq 0$ and $\Scal{h} \geq 0$. Then the trapped region is a \emph{future set}, i.e.\ if $q$ is trapped and $p$ is to the future of $q$, the $p$ is also trapped. In particular, the boundary of the trapped region is \emph{achronal}.
\end{prop}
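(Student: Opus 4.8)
The proposition concerns the Einstein-scalar-field system with $\Lambda \le 0$ and $\Scal{h} \ge 0$, and asserts that the trapped region is a future set. Recall that a point $q$ is trapped when $L(r) < 0$ and $N(r) < 0$ for the two future-directed null directions, equivalently $r_{,u} < 0$ and $r_{,v} < 0$ in the double-null coordinates (where $L \sim \partial_u$, $N \sim \partial_v$ up to positive rescaling). My plan is to show that the conditions $r_{,u} < 0$ and $r_{,v} < 0$ each propagate to the future along the appropriate null directions, so that if both hold at $q$ they continue to hold throughout the causal future of $q$.

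\textbf{The plan.} First I would translate ``$p$ is to the future of $q$'' into the statement that $p$ is reached from $q$ by moving in the direction of increasing $u$ and increasing $v$ (both null coordinates increasing toward the future). The key is then to control the sign of $r_{,u}$ and $r_{,v}$ under such motion. For this I will use the Raychaudhuri-type equation \eqref{eq:raychaudhuri}, which under the null energy condition (implied by the dominant energy condition, which in turn is guaranteed here since $\Lambda \le 0$ and the scalar field stress-energy satisfies it) gives $\covD_L(L(r)) \le 0$ along each null geodesic. Concretely, if $r_{,v} < 0$ at $q$, then moving to the future along the $u$-integral curve (the null geodesic generated by $L$) the quantity $r_{,v}$ can only decrease, so it remains negative; symmetrically $r_{,u} < 0$ propagates along the $v$-direction. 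The combination of these two monotonicities shows that once both $r_{,u}$ and $r_{,v}$ are negative, they persist throughout the future wedge emanating from $q$, which is precisely the causal future.

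\textbf{The main obstacle.} The delicate point is the cross-direction: propagating $r_{,v} < 0$ along the $u$-direction uses the pure Raychaudhuri equation \eqref{eq:raychaudhuri} and is clean, but I must also verify that moving along the $v$-direction does not destroy the sign of $r_{,v}$ (and analogously for $r_{,u}$ along $u$). This requires the second derivative $r_{,uv}$, controlled by the evolution equation
\[
	r \cdot r_{,uv} = \frac{r_{,u} r_{,v}}{\tilde{S} - \frac{2\hawking}{r^{n-1}}} \left[ (n-1) \frac{2\hawking}{r^{n-1}} - \frac{2}{n}\Lambda r^2 \right].
\]
In the trapped region one has $\tilde{S} - \frac{2\hawking}{r^{n-1}} < 0$ (by definition of trapped, since $\innerprod[g]{\D*{r}}{\D*{r}} = -2 r_{,u} r_{,v}/\Omega < 0$ there), while $r_{,u} r_{,v} > 0$; combined with $\Lambda \le 0$ and the monotonicity of $\hawking$ established in Lemma \ref{lem:HEmon} (which says $\hawking$ is non-decreasing in both $u$ and $v$ in trapped regions), I would argue that the right-hand side has a sign making $r_{,uv}$ keep both $r_{,u}$ and $r_{,v}$ negative. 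The bracketed factor $(n-1)\frac{2\hawking}{r^{n-1}} - \frac{2}{n}\Lambda r^2$ needs its sign pinned down using $\Scal{h} \ge 0$ and the monotonicity, which is the one calculation requiring care. Once the sign of $r_{,uv}$ is controlled consistently with the sign propagation, the achronality of the boundary of the trapped region follows immediately: a future set has achronal boundary by a standard causal-theory argument, since no two boundary points can be causally related.
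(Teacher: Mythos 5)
Your strategy --- propagating the signs of $r_{,u}$ and $r_{,v}$ directly via the Raychaudhuri equation plus the cross equation for $r_{,uv}$ --- is genuinely different from the paper's proof and can be made to work, but as written it contains two concrete errors and leaves its self-identified crux unverified. First, you have the roles of the two equations swapped, in both paragraphs: \eqref{eq:raychaudhuri} controls $\covD_L(L(r))$, i.e.\ the derivative of $r_{,u}$ \emph{along} $u$ (equivalently, $\Omega^{-1}r_{,u}$ is non-increasing in $u$, and symmetrically for $r_{,v}$ in $v$); it says nothing about ``$r_{,v}$ along the $u$-integral curve'' --- that is precisely the cross-derivative $r_{,uv}$ step --- while the step you call delicate (``moving along the $v$-direction does not destroy the sign of $r_{,v}$'') is the genuine Raychaudhuri direction. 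Second, your justification of \eqref{eq:raychaudhuri} through the dominant energy condition is false here: the paper remarks immediately after this proposition that the DEC is generally \emph{violated} when $\Lambda \leq 0$; what survives is the null energy condition, and that suffices, since $\Einst{\tilde{g}}(L,L) = T(L,L) - \Lambda \innerprod[\tilde{g}]{L}{L} = (L\phi)^2 \geq 0$ for null $L$. Third, the bracket sign you flag but do not compute closes more easily than you suggest: at any trapped point, $\innerprod[g]{\D*{r}}{\D*{r}} = \tilde{S} - \frac{2\hawking}{r^{n-1}} < 0$ by \eqref{eq:varpidef}, so $\Scal{h} \geq 0$ alone forces $\hawking > 0$ there; with $\Lambda \leq 0$ the bracket $(n-1)\frac{2\hawking}{r^{n-1}} - \frac{2}{n}\Lambda r^2$ is strictly positive, the prefactor $r_{,u}r_{,v}\big/\bigl(\tilde{S} - \frac{2\hawking}{r^{n-1}}\bigr)$ is negative, hence $r_{,uv} < 0$ --- and on this route you never need Lemma \ref{lem:HEmon} at all. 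What you must still add is the bootstrap: $\hawking > 0$ and the sign of $r_{,uv}$ are only known \emph{while} trappedness persists, so a continuity (open--closed) argument on the causal future of $q$ is needed to conclude that neither $r_{,u}$ nor $r_{,v}$ can reach zero.

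For comparison, the paper's proof is shorter and bypasses both \eqref{eq:raychaudhuri} and the $r_{,uv}$ equation entirely: it recasts trappedness through the single scalar criterion $\tilde{S} - \frac{2\hawking}{r^{n-1}} < 0$ and propagates that. Since $\Scal{h} \geq 0$ gives $\hawking > 0$, Lemma \ref{lem:HEmon} gives $\hawking_{,u}, \hawking_{,v} \geq 0$, and $r_{,u}, r_{,v} < 0$ in the trapped region, one gets $\partial_u \frac{\hawking}{r^{n-1}} > 0$ and $\partial_v \frac{\hawking}{r^{n-1}} > 0$, so the criterion is preserved to the causal future (with the same implicit continuity argument ruling out a flip to anti-trapped, since a sign change of $r_{,u}$ or $r_{,v}$ would force $\D*{r}$ null). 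Your route, once the equation roles are corrected and the bootstrap is made explicit, trades the paper's reliance on the Hawking-energy monotonicity for a four-direction sign analysis; the paper's buys brevity from the observation that trappedness is equivalent to one inequality on $\hawking$.
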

\begin{proof}
	As $q$ is trapped, we have that $\tilde{S} - \frac{2\hawking}{r^{n-1}} < 0$ there, and $r_{,u}$ and $r_{,v}$ are both negative. As $\Scal{h} \geq 0$ we must have $\hawking > 0$. 
	From Lemma \ref{lem:HEmon} we have that $\hawking_{,u}$ and $\hawking_{,v}$ are both non-negative. 
	This in particular implies 
	\[ - \partial_u \frac{\hawking}{r^{n-1}} < 0, \quad - \partial_v \frac{\hawking}{r^{n-1}} < 0, \]
	and shows that every point to the causal future of $q$ satisfies $\tilde{S} - \frac{2\hawking}{r^{n-1}} < 0$. 
	
	That the past boundary of a future set is achronal is a standard fact \cite[Ch.\ 14, \P27]{ONeil1983}.
\end{proof}
This result is specific to the scalar-field model.
For general space-times, the geometry of the boundary of the trapped region can be much more complicated; see \cite{Kommem2013}. 
We remark here that this monotonicity property does \emph{not} rely on the dominant energy condition, as when $\Lambda \leq 0$ the dominant energy condition is generally violated (the null energy condition, however, holds).

\subsection{Fluids}
For a fluid with density $\rho$, pressure $p$, and velocity field $\xi$, the stress-energy tensor is given by 
\begin{equation}
	T = (p + \rho) \xi\otimes \xi + p \tilde{g}.
\end{equation}
As $\xi$ is required to be a unit time-like vector field, we have that the condition $T$ is block diagonal requires $\xi$ to be tangent to $Q$. This implies that $T_F$ is proportional to $h$. By Lemma \ref{lem:rigid} necessarily $p$ is constant along $F$. 

Furthermore, as the trace-less part of of $T_Q$ is $(p+\rho) \xi\otimes \xi + \frac12 (p+\rho) g$, which as we recall is equal to the trace-less part of $\frac{n}{r} \covD^2 r$, we must have that $\rho$ and $\xi$ are also independent of $F$, which then, by consideration of the pure-trace part ($\frac12 (p-\rho) g$) and \eqref{eq:EinsteinQ} gives us that $(F,h)$ has constant scalar curvature. So for any dimension $n \geq 2$ we can conclude that $(F,h)$ is Einstein. 

Therefore the equations of motion reduce similarly as in the case of spherical symmetry. 
The conservation law \eqref{eq:divTQ} implies
\[ \mathrm{div}_g T_Q + \frac{n}{r} (p + \rho) (\covD_\xi r) \xi = 0,\]
which leads us to the following system of equations
\begin{subequations}
	\begin{gather}
		\label{eq:fluids1Energy} \covD_{r^n \xi}(\rho) + (p + \rho) \mathrm{div}_g (r^n \xi)= 0, \\
		\label{eq:fluids1Momentum} (p + \rho) \covD_\xi \xi + \covD p + (\covD_\xi p)\xi = 0.
	\end{gather}
\end{subequations}
Making the usual assumption that our fluid is homentropic, we can postulate that the equation of state is captured in a relation $p = p(\rho)$.
Consider the pair of functions $\Xi(\rho)$ and $H(\rho)$ given by $\Xi' / \Xi  = 1 / (p + \rho)$ and $H' / H =  p' / (p + \rho)$. They are related to the per-particle enthalpy, and we note that 
\[ \frac{\Xi'}{\Xi} + \frac{H'}{H} = \ln( \Xi H) ' = \frac{1 + p'}{p+\rho} =  \ln(p+\rho) '\]
so we will postulate $\Xi H = p + \rho$. 
The equations of motion can be re-written as
\begin{subequations}
	\begin{gather}
		\label{eq:fluids2Energy} \mathrm{div}_g \left( r^n \Xi \xi \right) = 0,\\
		\label{eq:fluids2Momentum} \covD_{\xi} (H \xi) + \covD H = 0. 
	\end{gather}
\end{subequations}
Noting that $\innerprod[g]{\covD_X(H\xi)}{\xi} = - \covD_X H$, \eqref{eq:fluids2Momentum} implies that 
\[ \intprod_{\xi} \D* (H\xi^\flat) = 0.\]
As $\xi$ is the lift of a vector field from $Q$, and $H$ by our discussion above is a function on $Q$, we must have that the two-form $\D* (H\xi^\flat)$ is proportional to the volume form on $Q$, and the fact that its interior product with $\xi$ vanishes implies that
\begin{equation}
	\D*(H \xi^\flat) = 0.
\end{equation}
This equation gives the unsurprising fact that homentropic fluids on a warped product space-time must be \emph{irrotational} (the two form in question being precisely the vorticity for relativistic fluids). 

Combining with \eqref{eq:fluids2Energy} we see that $H \xi$ solves a nonlinear Hodge system. On simply connected domains that $H\xi$ is curl-free allows us to lift to a \emph{fluid potential} which then solves a quasilinear wave equation; see also \cite[Ch.\ 1]{Christ2007a}. 
Here, for simplicity, we specialize to the ultra-relativistic equations of state $p = \gamma \rho$ where $\gamma \geq 0$ and $\sqrt{\gamma}$ is the \emph{speed of sound}. 
Two specific cases with cosmological applications are $\gamma = 0$ (the dust model) and $\gamma = \frac{1}{n+1}$ (the radiation model, where $\trace_{\tilde{g}} T = 0$). 

In this setting we can solve 
\[ H = \rho^{\frac{\gamma}{1 + \gamma}}, \quad \Xi = (1 + \gamma) \rho^{\frac{1}{1 + \gamma}}. \]
The equations of motion become the following system of equations on $Q$:
\begin{subequations}
	\begin{gather}
		\mathrm{div}_g \left( r^n \rho^{\frac{1}{1+\gamma}} \xi \right) = 0, \label{eq:fluids3Energy}\\
		\D*( \rho^{\frac{\gamma}{1+\gamma}} \xi^\flat)  = 0, \label{eq:fluids3Momentum}\\
		r \triangle_g r - \left( \frac1n \Scal{h} - (n-1) \innerprod[g]{\D*r}{\D*r} \right) + \frac{2\Lambda}{n} r^2 + \frac{1-\gamma}{n} \rho r^2 = 0, \label{eq:fluids3Warp} \\
		- \frac{n}{r} \covD^2 r + \frac{n}{2r} \triangle_g r g = (1 + \gamma) \rho  \left( \xi\otimes\xi + \frac12 g\right), \label{eq:fluids3Hawk} \\
		\Scal{g} = \frac{n}r \triangle_g r + \frac{4}{n} \Lambda  + \left(\frac{2-2\gamma}{n} - \gamma - 1\right) \rho \label{eq:fluids3Metric} .
	\end{gather}
\end{subequations}
We note in the case of the dust model, the second equation implies that $\xi$ is geodesic. 

\subsection{Cosmological solutions} 
Now consider the cosmological setting where we have Einstein's equation coupled to an ultra-relativistic fluid. 
We assume that we are in the \emph{expanding} setting, where every point is anti-trapped and $r$ serves as a time-function. 

We assume further the homogeneity condition
\begin{equation}\label{eq:assmhom}
	\text{There exists a Killing vector field on $(Q,g)$ that preserves $\rho$, $r$, and $\xi$.}
\end{equation}
Note that we do \emph{not} assume that the fluid velocity $\xi$ is orthogonal to the level sets of the function $r$; as we will see below the orthogonality can be derived as a consequence of condition \eqref{eq:assmhom}.  

On $(Q,g)$ we choose a coordinate system $(r,s)$, where $\innerprod[g]{\D*r}{\D*s} = 0$; this can be obtained by setting $s$ constant along the integral curves of $\covD r$. 
The Killing vector field in \eqref{eq:assmhom} can then be identified with $\partial_s$. The unknowns (all as functions of $r$) are 
\begin{align*}
	\text{Components of the metric}: &~ - \frac{1}{\sigma} \D{r^2} + \alpha \D{s^2}; \\
	\text{Components of } \xi^\flat:&~ \xi_r \D{r} + \xi_s \D{s}; \\
	\text{The density}: &~ \rho.
\end{align*}
Note that $\sigma = - \innerprod[g]{\D*r}{\D*r}$. 

First observe that
\[ \covD^2_{X, \partial_r} r = - \frac{1}{\sigma} \covD^2_{X, \covD r} r = - \frac{1}{2\sigma} \covD_X (\innerprod[g]{\covD r}{\covD r}) = \frac1{2\sigma} \covD_X \sigma. \]
So \eqref{eq:fluids3Hawk} implies
\[ 0 = - \frac{n}{r} \covD^2_{\partial_s,\partial_r} r = \frac{\rho}{\sqrt{a\sigma}} \xi_r \xi_s \]
and hence we must conclude that $\xi_s \equiv 0$ and $\xi_r = - \frac{1}{\sqrt{\sigma}}$.  Then \eqref{eq:fluids3Momentum} is satisfied automatically, and we note that $\xi$, being of constant norm and orthogonal to a Killing field $\partial_s$, must be geodesic. 

So we have reduced to the three unknowns $\alpha,\sigma, \rho$. The equation \eqref{eq:fluids3Energy} becomes
\[
	\frac{\partial}{\partial r} \left( r^n \rho^{\frac{1}{1+\gamma}} \sqrt{\alpha} \right) = 0,
\]
implying 
\[ 
	\rho = \frac{ \rho_0}{r^{n(1+\gamma)} \alpha^{\frac{1+\gamma}{2}}}
\]
for some constant $\rho_0 \geq 0$. Using \eqref{eq:fluids3Hawk} again we have
\[ \frac1{2\sigma} \partial_r \sigma = \covD^2_{\partial_r, \partial_r} r = - \frac1{2\sigma} \triangle_g r - \frac{1+\gamma}{2\sigma} \frac{r\rho}{n} \]
which we simplify to
\begin{equation}\label{eq:fluids4-2-1}
	\partial_r \sigma = - \triangle_g r - \frac{1+\gamma}{n} r\rho.
\end{equation}
As we can express
\[ \triangle_g r = - \sqrt{\frac{\sigma}{\alpha}} \frac{\partial}{\partial r} ( \sqrt{\alpha \sigma} ) = - \frac12 \partial_r \sigma - \frac12 \frac{\sigma}{\alpha} \partial_r \alpha,\]
So we can rewrite \eqref{eq:fluids4-2-1} as 
\[
		\frac12 \left( \partial_r \sigma - \frac{\sigma}{\alpha} \partial_r \alpha\right) = - \frac{(1+\gamma)}{n} r\rho.
\]
From \eqref{eq:fluids3Warp} we also get
\[
	- r\triangle_g r = \frac{r}{2}\left( \partial_r \sigma + \frac{\sigma}{\alpha} \partial_r \alpha\right)  = 
	    - \frac1n \Scal{h} - (n-1) \sigma + \frac{2\Lambda}{n} r^2 + \frac{1-\gamma}{n} r^2 \rho.
\]
Altogether, we have the following system of equations:
	\begin{gather*}
		\rho = \frac{ \rho_0}{r^{n(1+\gamma)} \alpha^{\frac{1+\gamma}{2}}}, \\
		\partial_r (r^{n-1} \sigma) = - \frac{1}{n} \Scal{h} r^{n-2}  + \frac{2\Lambda}{n} r^n - \frac{2\gamma}{n} \rho r^n ,\\
		\frac12 \partial_r \left( \frac{\sigma}{\alpha} \right) = - \frac{1 + \gamma}{n} \frac{r\rho}{\alpha}. 
	\end{gather*}
These can be further simplified to 
\begin{subequations}
	\begin{gather}
		\partial_r \left( r^{n-1} \sigma + \frac{\Scal{h}}{n(n-1)} r^{n-1} - \frac{2\Lambda}{n(n+1)} r^{n+1} \right) = - \frac{2\gamma \rho_0}{n r^{n\gamma} \alpha^{\frac{1 + \gamma}{2}}}, \label{eq:fluids5-1}\\
		\partial_r \left( \frac{\alpha}{\sigma}\right)^{\frac{1 + \gamma}2} = \frac{(1+\gamma)^2\rho_0}{n} r^{1 - n(1+\gamma)} \sigma^{- \frac{3+\gamma}{2}}. \label{eq:fluids5-2}
	\end{gather}
\end{subequations} 
We remark that the scalar curvature can be computed by combining \eqref{eq:fluids3Warp} and \eqref{eq:fluids3Metric}, with the formula
\begin{equation}\label{eq:Scalcurv}
	\Scal{g} = \frac{1}{r^2} \left( \Scal{h} + n(n+1)\sigma\right)  + \frac{4- 2n}{n} \Lambda + \left( \frac{2 - 2\gamma-2n}{n}\right) \rho. 
\end{equation}

Consider first the dust case where $\gamma = 0$. We can solve \eqref{eq:fluids5-1} to get
\begin{equation}\label{eq:dustSigma}
	\sigma = \frac{2m}{r^{n-1}} + \frac{2\Lambda}{n(n+1)} r^2 - \frac{\Scal{h}}{n(n-1)}
\end{equation}
for some constant $m$; compare to \eqref{eq:Knorm}. With the explicit form of $\sigma$, 
the right hand side of \eqref{eq:fluids5-2} can be integrated explicitly as it is a rational function of $r$. We shall not perform the explicit integration here, but just note the following asymptotic properties as $r\nearrow \infty$:
\begin{description}
	\item[$\Lambda > 0$:] $\sigma$ grows quadratically in $r$, this implies that $\alpha/\sigma$ converges to a fixed constant. By rescaling $s$ we can assume that this constant is $1$. This shows that the geometry converges toward the vacuum geometry \eqref{eq:metricformorth} of Theorem \ref{thm:birkhoff2}. This reflects the fact that for the dust fluid the density $\rho$ decays as the inverse of the spatial volume $ \sqrt{a} r^n$. 
	\item[$\Lambda = 0$:] when the cosmological constant vanishes, we have to consider the next most significant term, which is $\Scal{h}$. 
	\begin{description}
		\item[$\Scal{h} < 0$:] The spacetime continues to expand as $r\to \infty$. When dimension $n > 2$ again we have that $\alpha / \sigma$ converges to a fixed constant, and we again converges to the vacuum geometry of \eqref{eq:metricformorth}; the expansion only occurs in the fiber directions but not on $Q$. 
		
		 When dimension $n = 2$, however, $\alpha / \sigma$ grows logarithmically, and departs from the vacuum geometry. 
		 \item[$\Scal{h} > 0$:] In this case we must have $m > 0$ to admit a trapped/anti-trapped region. Therefore for some finite $r$ we have $\sigma = 0$, signaling the arrival at a cosmological horizon (similarly to inside the white-hole region of the extended Schwarzschild solution). 
		 \item[$\Scal{h} = 0$:] In this case, for the solution to be expanding, we must have the mass parameter $m > 0$. The equation \eqref{eq:fluids5-2} can be integrated explicitly to obtain that $\alpha$ grows, at leading order, like $r^2$. 
	\end{description}
\end{description}

In the case $\Lambda > 0$, we have that the right hand side of \eqref{eq:fluids5-2} is integrable near infinity. 
In this case it is also interesting to analyze the asymptotic behavior to the past. We consider specifically the case where $\Scal{h} \leq 0$ and $m \geq 0$, as in this case $\sigma$ is positive for all $r \in \Real_+$.  There are two basic cases:
\begin{description}
	\item[$m = 0$:] The right hand side of \eqref{eq:fluids5-2} is not integrable near the origin. 
	Thus there must exist some $r_0 > 0$ such that $\lim_{r\searrow r_0} \alpha = 0$.  By \eqref{eq:Scalcurv} this is a finite time curvature singularity (with $\abs{\Scal{g}} \approx \alpha^{-1/2}$) resulting from the collapse of only one direction.
	\item[$m > 0$:] The right hand side of \eqref{eq:fluids5-2} is integrable near the origin. In this case the past asymptotic behavior depends on the constant of integration (equivalently the limit at $r \nearrow \infty$), the three possibilities are
		\begin{enumerate}
			\item $\alpha$ vanishes at some finite $r_0$, the asymptotics are the same as described above. 
			\item $\alpha$ has finite positive limit as $r \searrow 0$. These solutions are past-time-like geodesically incomplete, with a scalar curvature singularity at $r \searrow 0$ with rate $\Scal{g} \approx r^{-(n+1)}$. As $\alpha$ grows like $r^{1-n}$, the singularity is similar to that of the Schwarzschild singularity, and not to the big bang singularities of homogeneous isotropic cosmological solutions.
			\item $\lim_{r \searrow 0} \alpha/\sigma = 0$. The equation \eqref{eq:fluids5-2} require that the derivative of $\sqrt{\alpha/\sigma}$ behaves as $r^{(n-1)/2}$ near the origin (since $\sigma$ diverges as $r^{1-n}$). This implies that $\alpha$ vanishes quadratically at the origin. In this case all spatial directions collapse at the same rate as $r \searrow 0$, resulting in a curvature blow-up. 
			
			This is the case that is most similar to the FLRW solutions with positive cosmological constants. Indeed, consider the case $(F,h) = \mathbb{T}^n$ with the standard (flat) metric, we observe that a particular pair of $(\alpha,\sigma)$ that solves \eqref{eq:fluids5-1} and \eqref{eq:fluids5-2} are given by 
			\[ \alpha = r^2, \qquad \sigma = \frac{2m}{r^{n-1}} + \frac{2\Lambda}{n(n+1)} r^2, \qquad 2n(n+1)m = \rho_0\]
			in which case the solution is precisely the FLRW solution with spatial geometry $\mathbb{T}^{n+1}$. 
		\end{enumerate}
\end{description}

For $\gamma > 0$, in the setting where $\Lambda > 0$, we can show the existence of future-global solutions with the same qualitative asymptotic behavior as the $\gamma = 0$ case. We quickly sketch the iterative argument. (Note that we make no assumptions on the size of $\gamma$.)
\begin{enumerate}
	\item Let $\sigma_0$ be the solution to \eqref{eq:fluids5-1} with $\gamma = 0$. 
	\item Let $\alpha_0$ solve \eqref{eq:fluids5-2} with $\sigma_0$ in the role of $\sigma$. Note that $\alpha_0$ has growth rate on the order of $r^2$.
	\item Let $\sigma_1$ solve \eqref{eq:fluids5-1} with $\alpha_0$ playing the role of $\alpha$. The rate of growth of $\alpha_0$ guarantees that the right hand side is integrable near infinity, and hence $\sigma_0 - \sigma_1$ decays at infinity. 
	\item Let $\alpha_1$ solve \eqref{eq:fluids5-2} with $\sigma_1$ in the role of $\sigma$. Note that $\alpha_1$ still has growth rate on the order of $r^2$. 
	\item Iterate the above process to construct $\sigma_i$ and $\alpha_i$ for $i > 1$. Restricting to the region $r \in (r_0, \infty)$, by choosing $r_0$ sufficiently large we see that the mapping from $(\sigma_i, \alpha_i) \mapsto (\sigma_{i+1}, \alpha_{i+1})$ is Lipschitz with small Lipschitz constant, and hence the sequence is Cauchy and converges. 
\end{enumerate}
We note that the same argument cannot be applied to the cases where $\Lambda = 0$, as the growth rates of $\alpha$ and/or $\sigma$ are unsuitable for controlling the errors. 

\appendix

\section{Properties of Hodge operator} \label{app:hodgeprop}
Here we summarize some standard properties of the Hodge operator on a two-dimensional pseudo-Riemannian manifold $(Q,g)$. 
The proofs (which we omit) are largely exercises in the linear algebra of $2\times 2$ matrices. 

\begin{defn}
	Let $\epsilon$ denote the volume two-form (after a choice of orientation) of $(Q,g)$. We define Hodge operator $*_g$ by the following formulas.
	\begin{enumerate}
		\item Let $\eta$ be a one-form; then the Hodge operator acts as 
			\[ *_g \eta =  \intprod_{\eta^\sharp} \epsilon. \]
		\item Let $X$ be a vector field; then the Hodge operator acts as
			\[ *_g X = (\intprod_X \epsilon)^\sharp.\]
		\item Let $B$ be a symmetric covariant two-tensor, we define the left and right Hodge duals to be respectively
			\[ (*_g B)(X,Y) = - B(*_g X,Y), \quad (B*_g)(X,Y) = - B(X, *_g Y).\]
	\end{enumerate}
\end{defn}

\begin{prop}[Antisymmetry]
	The Hodge operator is antisymmetric with respect to the metric.
	\begin{enumerate}
		\item Let $\eta, \tau$ be one-forms; then $\innerprod[g]{*_g \eta}{\tau} + \innerprod[g]{\eta}{*_g\tau} = 0$. 
		\item Let $X,Y$ be vector fields; then $\innerprod[g]{*_g X}{Y} + \innerprod[g]{X}{*_g Y} = 0$.
		\item Let $B, D$ be symmetric covariant two-tensors; then 
			\[ \innerprod[g]{*_g B}{D} = \innerprod[g]{B*_g}{D} = - \innerprod[g]{B}{*_g D} = - \innerprod[g]{B}{D*_g}.\]
	\end{enumerate}
\end{prop}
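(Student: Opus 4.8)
The plan is to reduce everything to a single foundational identity for vectors and then propagate it through the musical isomorphism and the trace pairing on two-tensors. First I would compute, directly from the definition $*_g X = (\intprod_X \epsilon)^\sharp$, that for any vector fields $X,Y$ one has $\innerprod[g]{*_g X}{Y} = (\intprod_X\epsilon)(Y) = \epsilon(X,Y)$. Since $\epsilon$ is antisymmetric, $\innerprod[g]{*_g X}{Y} = \epsilon(X,Y) = -\epsilon(Y,X) = -\innerprod[g]{*_g Y}{X} = -\innerprod[g]{X}{*_g Y}$, which is exactly part (2). Part (1) is then immediate: the definitions give $(*_g\eta)^\sharp = *_g(\eta^\sharp)$, so that the Hodge operator on one-forms is the conjugate (via $\sharp$) of the Hodge operator on vectors; since the musical isomorphism is a $g$-isometry, applying part (2) with $X = \eta^\sharp$ and $Y = \tau^\sharp$ yields $\innerprod[g]{*_g\eta}{\tau} + \innerprod[g]{\eta}{*_g\tau} = 0$.

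For part (3) I would pass to the endomorphism picture. Writing $J$ for the endomorphism $X\mapsto *_g X$, part (2) says precisely that $J$ is $g$-antisymmetric; equivalently, when both indices are lowered $J$ coincides with the antisymmetric form $\epsilon$. Identifying a symmetric two-tensor $B$ with its associated $g$-self-adjoint endomorphism $\hat B$ and writing the tensor inner product as a trace, the left and right Hodge duals of $B$ correspond, up to sign, to the compositions $\hat B J$ and $J\hat B$. The three asserted equalities then split into one genuine identity, $\innerprod[g]{*_g B}{D} = \innerprod[g]{B*_g}{D}$, together with the two ``cross'' relations $\innerprod[g]{B*_g}{D} = -\innerprod[g]{B}{*_g D}$ and $\innerprod[g]{*_g B}{D} = -\innerprod[g]{B}{D*_g}$. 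The latter two are pure bookkeeping: they follow from cyclicity of the trace together with the fact that moving $J$ from one slot of the pairing to the other costs a sign, which is again part (2).

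The crux, and the only place where two-dimensionality is genuinely used, is the identity $\innerprod[g]{*_g B}{D} = \innerprod[g]{B*_g}{D}$. Here I would observe that, because $B$ is symmetric, the difference $(*_g B - B*_g)(X,Y) = -B(*_g X, Y) + B(X, *_g Y)$ is an \emph{antisymmetric} $(0,2)$-tensor, as swapping $X$ and $Y$ and using the symmetry of $B$ reverses its sign. In two dimensions the space of antisymmetric covariant two-tensors is one-dimensional, spanned by $\epsilon$, so $*_g B - B*_g$ is a multiple of the volume form; since $D$ is symmetric its pairing against any multiple of $\epsilon$ vanishes, giving the desired equality. I expect this antisymmetry-plus-dimension-count step to be the heart of the matter. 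The remaining verifications are routine index manipulations that I would not belabor — precisely the ``linear algebra of $2\times 2$ matrices'' alluded to — and the main obstacle is purely one of sign and ordering bookkeeping, all of which is forced by cyclicity of the trace and the single structural fact that $J$ lowers to the antisymmetric form $\epsilon$.
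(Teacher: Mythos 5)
Your proposal is correct, but there is nothing in the paper to compare it against: the appendix explicitly omits all proofs, remarking only that they are ``largely exercises in the linear algebra of $2\times 2$ matrices,'' and your argument is a clean realization of exactly that kind. The reduction is sound: $\innerprod[g]{*_g X}{Y} = (\intprod_X\epsilon)(Y) = \epsilon(X,Y)$ gives part (2), the identity $(*_g\eta)^\sharp = *_g(\eta^\sharp)$ (immediate from the paper's definitions) transports it to part (1), and in part (3) the cross relations $\innerprod[g]{B*_g}{D} = -\innerprod[g]{B}{*_g D}$ and $\innerprod[g]{*_g B}{D} = -\innerprod[g]{B}{D*_g}$ do follow from cyclicity of the trace plus $g$-antisymmetry of $J$, using that both $B$ and $D$ are symmetric (which the hypothesis grants, and which is needed to write the pairing as $\trace(\hat B\hat C)$ without transposes). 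One small correction of emphasis: your claim that the equality $\innerprod[g]{*_g B}{D} = \innerprod[g]{B*_g}{D}$ is ``the only place where two-dimensionality is genuinely used'' is overstated, and the dimension count is in fact superfluous there --- once you observe that $*_g B - B*_g$ is antisymmetric, its pairing against the symmetric tensor $D$ vanishes in \emph{any} dimension, with no need to identify the antisymmetric tensors as multiples of $\epsilon$. Two-dimensionality is instead consumed entirely by the definitions themselves: $\intprod_X\epsilon$ is a one-form, and $J$ an endomorphism, precisely because the volume form $\epsilon$ is a two-form.
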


\begin{defn}
	Define $\sign(g) = -1$ if $g$ has signature $(++)$ or $(--)$; and set $\sign(g) = +1$ if $g$ has signature $(+-)$ or $(-+)$. 
\end{defn}

\begin{prop}[Formulas involving taking the operator twice] \label{prop:hodge2}
	\begin{enumerate}
		\item $*_g *_g = \sign(g)$. 
		\item If $B$ is a symmetric covariant two-tensor, then 
			\[ *_g B *_g = \sign(g) \left[ B - (\trace_g B) g \right].\]
	\end{enumerate}
\end{prop}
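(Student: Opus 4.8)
The plan is to reduce everything to $2\times 2$ linear algebra by working pointwise in a local orthonormal frame, since the statement is purely algebraic at each point. Fix a $g$-orthonormal frame $\{e_1,e_2\}$ with dual coframe $\{\theta^1,\theta^2\}$, oriented so that $\epsilon = \theta^1\wedge\theta^2$, and write $s_i = g(e_i,e_i)\in\{\pm1\}$. Then $\sign(g) = -s_1 s_2$, since the two signs $s_1,s_2$ agree precisely when the signature is definite. The only place signature-dependent factors enter is through the musical isomorphisms, which in this frame satisfy $(\theta^i)^\sharp = s_i e_i$ (no sum).

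For the first identity I would compute the star on the coframe directly from $*_g\eta = \intprod_{\eta^\sharp}\epsilon$. A one-line contraction gives $*_g\theta^1 = s_1\theta^2$ and $*_g\theta^2 = -s_2\theta^1$. Iterating yields $*_g*_g\theta^1 = s_1(-s_2\theta^1) = -s_1 s_2\theta^1 = \sign(g)\theta^1$, and similarly on $\theta^2$; the corresponding statement for vector fields follows immediately by applying $\sharp$ and $\flat$. This settles part (1).

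For the second identity the first step is to unwind the composite of left and right Hodge duals. Setting $C = *_g B$, so that $C(X,Y) = -B(*_g X, Y)$, and then applying the right dual, one finds $(*_g B *_g)(X,Y) = -C(X,*_g Y) = B(*_g X, *_g Y)$ (one checks along the way that the left-then-right and right-then-left orders agree, so the notation is unambiguous). The question thus becomes purely one of precomposing both slots of $B$ with $*_g$. Computing $*_g e_1 = s_2 e_2$ and $*_g e_2 = -s_1 e_1$ from $*_g X = (\intprod_X\epsilon)^\sharp$, I would evaluate $B(*_g e_i,*_g e_j)$ on the three independent component pairs and compare with $\sign(g)\left[B - (\trace_g B)g\right]$ on the same pairs, using $\trace_g B = s_1 B_{11} + s_2 B_{22}$ and $g_{12}=0$. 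The diagonal entries get swapped (the spurious factor being $(s_1 s_2)^2 = 1$) while the off-diagonal entry acquires exactly the factor $\sign(g)$, so both sides agree.

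There is no genuine obstacle here; the whole proposition is elementary once the orthonormal frame is fixed, and the only real care needed is sign bookkeeping — tracking the factors $s_i$ introduced by $\sharp$ and $\flat$ and the overall $\sign(g) = -s_1 s_2$. If one prefers a component-free argument for part (2), one can instead decompose $B = B_0 + \tfrac12(\trace_g B)g$ into its trace-free and pure-trace parts and invoke part (1) together with the antisymmetry property: the identity $\innerprod[g]{*_g X}{*_g Y} = -\innerprod[g]{X}{*_g*_g Y} = -\sign(g)\innerprod[g]{X}{Y}$ handles the pure-trace piece (giving $*_g g *_g = -\sign(g)g$), a short check gives $*_g B_0 *_g = \sign(g)B_0$ on the trace-free piece, and the stated trace-reversal follows by linearity since $B_0 - \tfrac12(\trace_g B)g = B - (\trace_g B)g$.
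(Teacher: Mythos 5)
Your proof is correct, and it is essentially the paper's intended argument: the paper omits the proof entirely, remarking only that the proofs are ``largely exercises in the linear algebra of $2\times 2$ matrices,'' which is precisely your pointwise orthonormal-frame computation (with the sign bookkeeping $\sign(g) = -s_1 s_2$, $*_g\theta^1 = s_1\theta^2$, $*_g\theta^2 = -s_2\theta^1$, and the check that the left and right duals compose unambiguously to $B(*_g X, *_g Y)$ all verified correctly). The coordinate-free alternative you sketch for part (2) via the trace-free/pure-trace decomposition is also sound, but since the paper supplies no written proof there is nothing further to compare against.
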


\printbibliography
\end{document}